\newtheorem{theorem}{Theorem}
\newtheorem{lemma}{Lemma}
\newtheorem{definition}{Definition}
\newtheorem{corol}{Corollary}
\newtheorem{remark}{Remark}
\newtheorem{proof}{Proof}
\def\x{{x}}
\def\y{y}
\def\X{{X}}
\def\C{{\cal C}}
\def\opt{\ell}
\def\tk{s}
\newcommand{\R}[1]{{\cal R}_{#1}}
\newcommand{\T}[2]{{\cal T}_{#1,#2}}
\renewcommand{\l}[1]{c_{#1}}
\renewcommand{\ll}[2]{c_{#1}(#2)}
\newcommand{\lll}[2]{c'_{#1}(#2)}
\newcommand{\remove}[1]{}
\title{Minimum Weight Dynamo and Fast  Opinion Spreading}
\author{Sara Brunetti, Gennaro Cordasco, Luisa Gargano, Elena Lodi, Walter Quattrociocchi}
\begin{document}

\maketitle
\begin{abstract}We consider the following multi--level opinion spreading model on networks.  Initially, each node gets a weight from the set $\{0,\ldots,k-1\}$, where such a weight stands for the individuals conviction of a new idea or product. 
Then, by proceeding to rounds, each node updates its weight according to the weights of its neighbors.
We are interested in  the initial assignments of weights leading each node to get the  value $k-1$
--e.g. unanimous maximum level acceptance-- within
a given number of  rounds.
We determine lower bounds on  the sum of the initial weights of  the nodes  under the  irreversible  simple majority rules,
where a node increases its weight if and only if the majority of its neighbors have a weight that is higher than its own one.  Moreover,  we provide constructive tight upper bounds for some class of regular topologies: rings, tori, and cliques.

\noindent\textbf{Keywords:} multicolored dynamos, information spreading, linear threshold models.
\end{abstract}
\vspace*{-1truecm}
\section{Introduction} \vspace*{-0.2truecm}
New opinions and behaviors  usually spread gradually
through social networks. In 1966 a classical study showed how doctors'
willingness to prescribe a new antibiotic  diffused through professional contacts. A similar
pattern can be detected in
a variety of innovations: Initially a few innovators adopt, then people in contact with the
innovators get interested and then  adopt,  and so forth until eventually the
innovation spreads throughout the society.
A classical question is then  how many
innovators are needed, and how they need to be disposed, in order to get a fast unanimous adoption \cite{V95}.

In the wide set of the  information spreading models,  the first computational study about information diffusion \cite{Granovetter85} used the {\em linear threshold model} where  the threshold triggering the adoption  of a new idea to a node is given by the majority of its active neighbors.

Recently,  {information spreading} has been intensively studied also in the context of {\em viral marketing},
 which  uses  social networks to   achieve  marketing objectives  through self-replicating viral processes, analogous to the spread of viruses. The goal here is  to create a marketing message that can initially  convince a selected set of people and then  spread to  the whole network in a short period of time  \cite{Domingos2001}.
 One problem in  viral marketing is  the  {\em target set selection problem } which
  asks for identifying the minimal number of nodes  which can  activate, under some conditions, the whole network  \cite{EK10}.
The target set selection problem has been proved to be NP-hard through a reduction to the node cover problem \cite{Kempe03}.
Recently, inapproximability  results of  opinion spreading problems have been presented in \cite{Chen09}.

In this paper, we consider the following novel opinion spreading model.  Initially, each node is assigned a weight from the set $\{0,\ldots,k-1\}$; where the weight of a node represents the level of acceptance of the opinion
by the actor represented by the node itself.
Then, the process proceeds in synchronous rounds where each node updates its weight depending on the weights of its neighbors.
We are interested in  the initial assignments of weights leading to the all--($k-1$)  configuration  within
a given number of  rounds. The goal is to minimize the sum of the initial weights of the nodes.

Essentially, we want everyone to completely accept the new opinion within a given time bound while minimizing the initial convincing effort -- i.e, the sum  of the initial  node weights.

We are interested in the case in which the  spreading is
essentially a one-way process: once an agent has adopted an opinion (or behavior, innovation, $\ldots$), she sticks with it.
These are usually referred as {\em irreversible} spreading processes.
\vspace*{-0.2truecm}
\paragraph{\textbf{Dynamic Monopolies and Opinion Spreading.}}
In a different scenario,  spreading processes have been studied under the name of dynamic monopolies.
Monopolies were initially introduced to deal with faulty nodes  in distributed computing systems.
A monopoly in a graph is  a subset $M$ of nodes such that each other node of the graph has a prescribed number of neighbors belonging to $M$.
 The problem of finding monopolies in graphs has been widely studied,  see for example  \cite{BermondBPP96}, \cite{Linia93}, and \cite{Mishra2003126} for connections with minimum dominating set problem.

Dynamic monopolies or shortly {\em dynamo} were introduced by Peleg \cite{Peleg98}.
A strong relationship between opinion spreading problems, such as the target set selection, and  dynamic monopolies exists.
Indeed, they can be used to model the irreversible spread of opinions in social networks. \\
Dynamic monopolies have been  intensively studied with respect to the bounds of the size of the monopolies, the time needed to converge into a fixed point, and topologies over which the interaction takes place  \cite{BermondBPP03}, \cite{Bermond98}, \cite{Santoro03}, \cite{Kulich2011Dynamic}, \cite{NayakPS92}, \cite{Peleg02}.
\vspace*{-0.2truecm}
\paragraph{\textbf{Our results: Weighted opinion spreading.}}
We model the  opinion spreading process considered in this paper by means of weighted dynamos.

We  extend the setting of dynamos from 2 possible weights (denoting whether a node has accepted the opinion or  not) to $k$ levels  of opinion acceptance (a different extension has been studied in \cite{BrunettiLQ11}).
Initially, each node has a weight (which represents the node initial level of acceptance of the opinion) in the set $\{0,\ldots, k-1\}$. Then, each  node updates its weight by increasing  it of one unit if the weights of the simple majority of its neighbors is larger than its own.   We call $k$-dynamos, the initial weight assignments which lead each node in the network to have maximum weight $k-1$.   We are interested in   the minimum weight (i.e. the sum of the weight initially assigned to the nodes) of a $k$-dynamo. We focus on both the weight and the time (e.g., number of rounds needed to reach the final configuration); namely,   we study $k$-dynamos of minimum weight which converge into at most $t$ rounds.\vspace*{-0.2truecm}
\paragraph{\textbf{Paper organization.}}
In Section \ref{not}, we formalize the model and fix the notation. In  Section \ref{sec3}, we
determine lower bounds on the weight of $k$-dynamos  which converge into at most $t$ rounds.
Section \ref{sec4}  provides tight constructive upper bounds for rings, tori  and cliques.
In the last section, we conclude and state a few open problems.

\section{The Model}\label{not}
Let $G=(V,E)$ be an undirected connected graph.
For each $v\in V$, we denote by  $N(v)=\{u\in V \ | \  \{u,v\}\in E\}$ the neighborhood of $v$ and by $d(v)=|N(v)|$ its cardinality (i.e., the degree of $v$).
\\
We assume the nodes of $G$ to be weighted by the set  $[k]=\{0,1,\ldots,k-1\}$ of  the first $k\geq 2$ integers.
For each $v\in V$ we denote by $\l{v}\in [k]$ the weight assigned to a given node $v$.
\begin{definition}
A \textit{configuration} $\C$ on $G$ is a partition of $V$ into $k$ sets $\{V_0,V_1,\ldots,V_{k-1}\}$, where
$V_j=\{v\in V \ | \ \l{v}=j \}$ is the set of nodes   of weight $j$.
The weight $w(\C)$ of  $\C$ is the weighted sum of its nodes
 $$w(\C)=\sum_{j=0}^{k-1} j\times |V_j|=\sum_{v\in V} \l{v}.$$
\end{definition}
Consider the following node weighting game played on $G$ using the set of weights $[k]$ and   a threshold value $\lambda$
(for some   $0<\lambda \leq 1$):
\begin{itemize}
\item[]
 In the initial configuration, each node has  a  weight in $[k]$.
 Then node weights are updated in synchronous  rounds (i.e., round $i$ depends on round $i-1$ only). Let  $ \ll{v}{i}$ denote the weight of node $v$ at the end of round $i\geq 0$;
 during round $i\geq 1$, each node  updates its weight according to the weight of its neighbors at  round $i-1$. Specifically,    each node $v$
\begin{itemize}
\item first computes the number $n^+(v)=|\{u\in N(v) \ | \ \ll{u}{i-1}>\ll{v}{i-1}\}|$  of neighbors  having a weight larger  than its current one $\ll{v}{i-1}$;
\item then, it applies the following {\em irreversible rule: }
\hspace*{-0.3truecm}
$$  \ll{v}{i} = \left\{  \begin{array}{l l}
    \ll{v}{i{-}1}+1 &  \text{if  $n^+(v)\geq \left\lceil \lambda d(v) \right \rceil$} \\
    \ll{v}{i{-}1} &  \text{otherwise}
  \end{array} \right.$$
\end{itemize}	
\item[] We denote the initial configuration by $\C^0$ and the configuration at round $i$ by  $\C^i$.
\end{itemize}
We are interested into initial configurations that converge to the unanimous all-($k-1$)s configuration -- i.e., there exists a round $t^*$ such that for each $i\geq t^*$ and for each node $v,$  it holds $\ll{v}{i}=k-1$.
Such configurations are named $k$-weights dynamic monopoly (henceforth  \textit{$k$-dynamo}).

A \textit{($k,t$)-dynamo} is a $k$-dynamo which reaches its final configuration within $t$ rounds, that is, $\ll{v}{i}=k-1$ for each node $v \in V$ and $i \geq t.$ An example of ($k,t$)-dynamo, with $\lambda=1/2$,  is depicted in Figure \ref{example}.
Given a graph $G$, a set of weights $[k]$, a threshold $\lambda$, and an integer $t>0$, we aim for a minimum weight ($k,t$)-dynamo.

\begin{definition}
A ($k,t$)-dynamo on a graph $G$ with threshold $\lambda$ is \textit{optimal} if its  weight is minimal among all the ($k,t$)-dynamos  for the graph $G$ with threshold $\lambda$.
\end{definition}

\begin{figure}
	\centering
\includegraphics[width=0.70\linewidth]{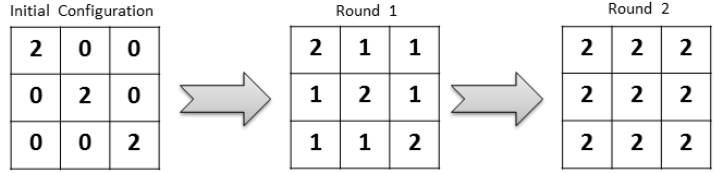}
\vspace*{-0.4truecm}
		\caption{A ($3,2$)-dynamo on a $3 \times 3$ Tori ($\lambda=1/2$): Starting from the initial configuration (left), two rounds are needed to reach the final all-($2$)s configuration.\vspace*{-0.5truecm}			\label{example}}
\end{figure}

\section{Time bounded dynamos}\label{sec3} \vspace*{-0.2truecm}
In this section we provide a lower bound on the weight of a ($k,t$)--dynamo and
study the minimum value of $t$ for which an optimal ($k,t$)--dynamo coincides with a $k$--dynamo.
\subsection{Preliminary Results} \vspace*{-0.2truecm}

\begin{definition} \label{def3}
Consider an undirected connected graph  $G=(V,E)$. Let $k\geq 2$ and $t \geq 1$ be integers and $0<\lambda\leq1$. An initial configuration $\C$ for $G$ is called ($k,t$)-simple-monotone if $V$ can be partitioned into $t+1$ sets $\X_{-s},\X_{-s+1},\ldots\X_{k-1}$ (here $\tk=t-k+1$) where $\X_{k-1}\neq \emptyset$ , and for each $v \in \X_i$

 (i) $\ $ $\ll{v}{0}=\max(i,0)$;

 (ii) $v$ has at least $\lceil \lambda d(v) \rceil$ neighbours in $\bigcup_{j=i+1}^{k-1}\X_j.$
\end{definition}
\begin{lemma}\label{lem:ub}
Any ($k,t$)-simple-monotone configuration for an undirected connected graph $G$ is a ($k,t$)-dynamo  for  $G$.
\end{lemma}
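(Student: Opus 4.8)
The plan is to establish, by induction on the round number $r$, a single invariant that pins down how fast the weights rise with the level index, namely
\[
\ll{v}{r}\ \geq\ \min\big(k-1,\ i+r\big)\qquad\text{for every } v\in\X_i\ \text{and every } r\geq 0 .
\]
Granting this, the lemma follows at once by taking $r=t$: the deepest level is $\X_{-\tk}$ with $\tk=t-k+1$, so every $v\in\X_i$ has $i\geq-\tk$, whence $i+t\geq -\tk+t=k-1$ and $\min(k-1,i+t)=k-1$; since weights never exceed $k-1$, we conclude $\ll{v}{t}=k-1$ for all $v$, i.e. the configuration is a $(k,t)$-dynamo.

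For the base case $r=0$, condition (i) gives $\ll{v}{0}=\max(i,0)\geq i$, and as all level indices satisfy $i\leq k-1$ we have $\min(k-1,i)=i$, so the invariant holds. For the inductive step I assume the bound at round $r$ for every node and prove it at round $r+1$ for a fixed $v\in\X_i$. Because the update rule is irreversible, weights are non-decreasing, so I only ever need to force increments, and I distinguish whether the target $\min(k-1,i+r+1)$ equals $i+r+1$ or is capped at $k-1$. In the uncapped case $i+r+1<k-1$, either $\ll{v}{r}$ already reaches $i+r+1$ (and monotonicity finishes it), or the inductive hypothesis forces $\ll{v}{r}=i+r$; I then invoke condition (ii): each of the $\lceil\lambda d(v)\rceil$ neighbours $u\in\bigcup_{j=i+1}^{k-1}\X_j$ lies in some $\X_j$ with $j\geq i+1$, so by induction $\ll{u}{r}\geq\min(k-1,j+r)\geq i+r+1>\ll{v}{r}$. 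Thus all these neighbours count in $n^+(v)$, the threshold is met, and $v$ increments to $i+r+1$ as required. The capped case $i+r+1\geq k-1$ is analogous: if $\ll{v}{r}=k-1$ we are done, otherwise the hypothesis forces $\ll{v}{r}=k-2$ and the same neighbours now satisfy $\ll{u}{r}\geq\min(k-1,j+r)=k-1>k-2$, again meeting the threshold and pushing $v$ to $k-1$.

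The only point that needs a word of care is that condition (ii) becomes vacuous for $v\in\X_{k-1}$, where $\bigcup_{j=k}^{k-1}\X_j$ is empty; but this is harmless, since such nodes already carry initial weight $k-1$ and by monotonicity stay there, so the inductive step never actually calls on (ii) for them — in both regimes above the node being incremented has current weight strictly below $k-1$, hence sits at a level $i\leq k-2$ for which $\bigcup_{j=i+1}^{k-1}\X_j\supseteq\X_{k-1}\neq\emptyset$ is nonempty and (ii) is meaningful. I expect the main (though still light) obstacle to be precisely this bookkeeping: matching the two regimes of the $\min$ against the current weight of $v$, and verifying that strictly higher levels have, by the inductive hypothesis, already overtaken $v$ by exactly the amount needed to trip the threshold.
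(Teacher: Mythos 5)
Your proof is correct and rests on the same invariant as the paper's --- namely that every $v\in\X_i$ satisfies $\ll{v}{r}\geq\min(k-1,\,i+r)$, so that the neighbours guaranteed by condition (ii) remain one unit ahead and keep tripping the threshold --- the only organizational difference being that you induct on the round $r$ (proving a lower bound for all levels simultaneously) whereas the paper inducts downward on the level index $i$ (establishing the exact weight $\min(j+i,k-1)$ for all rounds $j$ at once). Both arguments conclude identically at $r=t$, since $i+t\geq-\tk+t=k-1$ forces every node to weight $k-1$.
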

\begin{proof}{}
We show  that for each  $i=-\tk,-\tk+1,\ldots,k-1$ (here $\tk=t-k+1$) and  $j=0,\ldots,t$ and for each $u\in X_i$ 

$$ \ll{u}{j}= \left\{  \begin{array}{l l}
    \min(j+i,k-1) &  \text{if  $j+i>0$} \\
    0 &  \text{otherwise.}
  \end{array} \right.	$$
We prove this statement by induction on $i$ from $k-1$ back to $-\tk$.
For $i=k-1$ the nodes in $X_{k-1}$ have weight $k-1$ from the initial configuration and the statement is trivially true for each round $j$. \\
Assume now that the statement is true for any $r > i$. For each  $u \in X_i$, we know that $u$ has at least  $\lceil \lambda d(v) \rceil$ neighbours which belong to $\bigcup_{r=i+1}^{k-1} \X_r$. By induction, each of this neighbor nodes, for each round $j$ has a weight greater or equal to $\min(j+i+1,k-1)$ if $j+(i+1)>0$.

Hence, $u$ preserves its weight $c_u(j)=max(i,0)=0$
until it increases its weight at each round $j$ such that $j+(i+1)>1$
(i.e. $j+i>0$) and $c_u(j) < k-1$; as a result each node in $\X_i$ has weight  $\min(j+i,k-1)$ whenever $j+i>0$, for each  $j=0,1,\ldots,t$.
\\
The Lemma follows since at round $t$, $i+j=i+t\geq -\tk+t=k-1>0$. Hence, all the nodes will have weight $\min(i+t,k-1)=k-1.$.
\end{proof}

\begin{lemma}\label{lem:lem1}
Let $G=(V,E)$ be an undirected connected graph. There exists an  optimal ($k,t$)-dynamo for $G$ which is a ($k,t$)-simple-monotone configuration for $G$.
\end{lemma}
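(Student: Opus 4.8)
The plan is to take an arbitrary optimal $(k,t)$-dynamo $\C$ and to read a simple-monotone configuration directly off its evolution, in such a way that the weight can only decrease. Concretely, I would run the dynamics of $\C$ and, for each node $v$, let $\tau(v)$ be the first round at which $v$ attains the maximal weight, i.e. $\tau(v)=\min\{j\ge 0:\ll{v}{j}=k-1\}$; this is well defined with $0\le\tau(v)\le t$ since $\C$ is a $(k,t)$-dynamo. I then place each $v$ in the class $\X_{k-1-\tau(v)}$ and define a new initial configuration $\Cr$ by $\lll{v}{0}=\max(k-1-\tau(v),0)$. Because $0\le\tau(v)\le t$, the index $k-1-\tau(v)$ ranges exactly over $\{-\tk,\ldots,k-1\}$ with $\tk=t-k+1$, so the $\X_i$ form the required family of $t+1$ classes. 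It then remains to verify that $\Cr$ is $(k,t)$-simple-monotone and that $w(\Cr)\le w(\C)$; Lemma \ref{lem:ub} together with the optimality of $\C$ will force $\Cr$ to be an optimal $(k,t)$-dynamo, proving the claim.

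To see that $\Cr$ is simple-monotone, first note that $\X_{k-1}\ne\emptyset$: if $r=\min_v\tau(v)$ were positive, then a node $v$ with $\tau(v)=r$ reaching weight $k-1$ at round $r$ would need at least $\lceil\lambda d(v)\rceil\ge 1$ neighbours of strictly larger weight at round $r-1$, i.e. neighbours already at $k-1$ and hence with $\tau<r$, contradicting the minimality of $r$; so $r=0$. Condition (i) holds by construction. The main point is condition (ii), and this is where I expect the real work to lie. The nodes of $\X_{k-1}$ are seeds already at the maximum weight and need no check, so fix $v$ with $\tau(v)\ge 1$, i.e. $i=k-1-\tau(v)\le k-2$. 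Since the weight of a node rises by at most one per round, $\ll{v}{\tau(v)-1}=k-2$, and the increase at round $\tau(v)$ means $v$ had at least $\lceil\lambda d(v)\rceil$ neighbours whose weight exceeded $k-2$ at round $\tau(v)-1$. As $k-1$ is the largest admissible weight, each such neighbour $u$ satisfies $\ll{u}{\tau(v)-1}=k-1$, hence $\tau(u)\le\tau(v)-1<\tau(v)$ and $u\in\X_{k-1-\tau(u)}$ with $k-1-\tau(u)>i$. Thus $v$ has at least $\lceil\lambda d(v)\rceil$ neighbours in $\bigcup_{j=i+1}^{k-1}\X_j$, which is exactly condition (ii).

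Finally I would compare weights node by node. In $\C$ the weight rises by at most one per round and equals $k-1$ at round $\tau(v)$, so $k-1\le\ll{v}{0}+\tau(v)$, giving $\ll{v}{0}\ge k-1-\tau(v)$; combined with $\ll{v}{0}\ge 0$ this yields $\ll{v}{0}\ge\max(k-1-\tau(v),0)=\lll{v}{0}$. Summing over $V$ gives $w(\Cr)\le w(\C)$. By Lemma \ref{lem:ub} the simple-monotone configuration $\Cr$ is itself a $(k,t)$-dynamo, and since $\C$ has minimum weight we must have $w(\Cr)=w(\C)$, so $\Cr$ is an optimal $(k,t)$-dynamo that is simple-monotone. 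The subtlety to keep in view is that the whole verification of condition (ii) relies on $k-1$ being the maximal weight, so that ``strictly larger weight'' can only mean $k-1$; this is precisely what pins the forcing neighbours down as belonging to strictly higher classes, and without it the class indices could not be controlled.
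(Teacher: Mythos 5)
Your proposal is correct and follows essentially the same route as the paper's proof: you partition $V$ by the round $\tau(v)$ at which each node first reaches weight $k-1$, reset the initial weight to $\max(k-1-\tau(v),0)$, and verify conditions (i)--(ii) and the weight inequality exactly as the authors do. The only additions are minor elaborations (the explicit argument that $\X_{k-1}\neq\emptyset$ and that the forcing neighbours must already be at weight $k-1$), which the paper leaves implicit.
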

\begin{proof}{}
Let $\C$ be an optimal  ($k,t$)-dynamo.
Define a new configuration $\C'$  as follows: Let $\tk=t-k+1$, for $i=k-1,k-2,\ldots,-\tk$, let $\X_i$ be the set of nodes that, starting with configuration $\C$, reaches permanently the weight $k-1$ at round $k-1-i$,
that is,

$\X_i=\{u\in V\ |\ \ll{u}{k-2-i}\neq k-1, \hbox{ and } \ll{u}{j}=k-1 \hbox{ for each } j\geq k-1-i\}.$
\\
In $\C'$, for each $u\in X_i$ set $ \lll{u}{0}= \max(i,0).$\\
Notice that since $\C$ is a $k$-dynamo which converges into $t$ rounds,  $\{\X_{-\tk},\X_{-\tk+1},\ldots,\X_{k-1}\}$ is a partition of $V$ and $\X_{k-1}\neq \emptyset$.
We now show  that $w(\C')\leq w(\C)$ and $\C'$ is a ($k,t$)-simple-monotone configuration for $G$.
Clearly, \vspace*{-0.2truecm}
\begin{itemize}
\item [(a)]for each index $i\leq 0,$   and for each  $u\in \X_i,$  $\ll{u}{0}\geq \lll{u}{0}=0$;
\item [(b)]for each  $i>0$  and for each  $u\in \X_i$ we have $\ll{u}{0}\geq \lll{u}{0}=i$ (otherwise $u$ cannot reach the final weight $k-1$ by round $k-1-i$, since the weight of a node increases by at most $1$ at each round).  \vspace*{-0.2truecm}
\end{itemize}
By using (a) and (b) above we have that $w(\C')\leq w(\C)$. It remains to show that $\C'$ is a ($k,t$)-simple-monotone configuration for $G$.
By construction, $\C'$ satisfies point (i) of Definition \ref{def3}.
Moreover, for each  $u \in X_i$, we know that $u$ in the configuration $\C$ reaches the weight $k-1$ at round $k-1-i$. Hence at least $\lceil \lambda d(v)\rceil$ of its neighbors have weight $k-1$ at round $k-1-i-1=k-1-(i+1)$, that is at least $\lceil \lambda d(v)\rceil$ of its neighbors belong to $\bigcup_{j=i+1}^{k-1} \X_j$. Hence, point  (ii) of Definition \ref{def3} also holds. 
\end{proof}


\vspace*{-0.4truecm}

\subsection{A Lower Bound} \vspace*{-0.2truecm}
\begin{theorem} \label{th1}
Consider  an  undirected connected graph $G=(V,E)$ and let $k\geq 2$ and $t \geq 1$ be integers. Any ($k,t$)-dynamo $\C$, with $\lambda=1/2$, has weight
$$ w(\C)\geq  \left\{  \begin{array}{l  l}
     \frac{|V|}{2 \rho (\opt+\tk+1) +1}\times \left( k{-}1+\rho\opt(\opt{+}1)  \right) \\
      \ \ \ \ \text{ where } \opt=\left\lfloor\frac{\sqrt{(2\rho\tk+ \rho +1)^2+4\rho(k-1)}-(2\rho\tk+ \rho +1)}{2\rho}\right\rfloor & \text{if  $t\geq k-1$} \\
     \frac{|V|}{2 \rho (\opt+\tk+1) +1}\times \left( k{-}1+\rho(\opt(\opt{+}1)-\tk(\tk{+}1))  \right) \\
     \ \ \ \ \text{ where } \opt=\left\lfloor\frac{\sqrt{4\rho (t+1)+(\rho-1)^2}-(2\rho\tk+ \rho +1)}{2\rho}\right\rfloor &\text{otherwise,}
  \end{array} \right.$$
 where $\rho$ is the ratio between the maximum and the minimum degree of the nodes in $V$ and $\tk=t-k+1$.
\end{theorem}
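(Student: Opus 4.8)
The plan is to combine Lemma~\ref{lem:lem1} with an edge--counting estimate on the level sets of a simple--monotone configuration, followed by a short optimization. By Lemma~\ref{lem:lem1} some optimal $(k,t)$--dynamo is $(k,t)$--simple--monotone, so it is enough to lower bound $w(\C)$ over all $(k,t)$--simple--monotone $\C$. Fix such a $\C$ with parts $\X_{-\tk},\dots,\X_{k-1}$, put $n_i=|\X_i|$, and let $m_i=\sum_{j\ge i}n_j$ count the nodes in $\bigcup_{j\ge i}\X_j$. Since a node of $\X_i$ has initial weight $\max(i,0)$ (the parts of non--positive index carry weight $0$), one has the two convenient forms $w(\C)=\sum_{i=1}^{k-1}i\,n_i=\sum_{i=1}^{k-1}m_i$; I will bound the second sum from below.

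The core of the argument is one necessary inequality limiting how fast the tail counts $m_i$ can grow as $i$ decreases. Orient every edge between two distinct levels from its lower to its higher endpoint. Condition (ii) of Definition~\ref{def3} with $\lambda=1/2$ forces each non--top node $v$ to send at least $\lceil d(v)/2\rceil\ge d_{\min}/2$ up--edges, and hence to receive at most $\lfloor d(v)/2\rfloor\le d_{\max}/2$ down--edges, whereas a node of $\X_{k-1}$ is exempt from (ii) and may receive up to $d_{\max}$ down--edges. Applying this to $U_i=\bigcup_{j\ge i}\X_j$: the up--edges internal to $U_i$ must carry all the up--demand of its non--top nodes, and together with the up--edges entering $U_i$ from $\X_{i-1}$ they consume the total down--capacity of $U_i$. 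Balancing demand against capacity and using $\rho=d_{\max}/d_{\min}$ yields an inequality of the form $m_{i-1}\le \rho\,m_i+(\rho+1)\,n_{k-1}$. The additive term, created precisely by the surplus capacity of the top level, is essential: without it the recursion would give only a geometric (and far weaker) bound, whereas with it the $m_i$ are forced to accumulate almost linearly.

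Given this recursion I would estimate $w(\C)$ by a two--region computation. Write $N=n_{k-1}$. Running the recursion downward through the weight--$0$ levels $\X_0,\dots,\X_{-\tk}$ bounds $m_1$, the number of nodes of positive weight, from below in terms of $|V|$ and $N$; running it upward through the positive levels bounds each $m_i$ below by roughly $\max\!\bigl(N,\ m_1-2\rho N(i-1)\bigr)$ (with a geometric correction when $\rho>1$). Summing these bounds over $i=1,\dots,k-1$ produces a lower bound on $\sum_i m_i$ that depends only on $N$ and on the number $\opt$ of positive levels on which the linear term stays above $N$. The final step is to minimise this expression over the free quantity $N$—equivalently, to choose the best $\opt$—which leads exactly to $\frac{|V|}{2\rho(\opt+\tk+1)+1}\bigl(k-1+\rho\,\opt(\opt+1)\bigr)$, the optimal $\opt$ being the positive root of the quadratic obtained by setting the derivative to zero, whence the square--root/floor expression in the statement. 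The case $t<k-1$ (so $\tk<0$) is the same computation with no weight--$0$ level available: every node must start at weight at least $k-1-t$, which moves the bottom of the iteration and yields the correction $-\rho\,\tk(\tk+1)$ in the second branch.

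I expect the derivation of the growth inequality to be the main obstacle. One has to track the asymmetry between the $d_{\min}/2$ up--demand and the $d_{\max}/2$ down--capacity of ordinary nodes, incorporate the exemption of the top level correctly (it alone supplies the additive term that fixes the true order of the bound), and—most delicately—make the estimate strong enough for $\rho>1$, where a careless count permits spurious geometric growth of the $m_i$ and would break the linear denominator. Once the right inequality is in place, the remaining work is the essentially mechanical optimisation over $N$ and $\opt$ and the separation into the two cases $t\ge k-1$ and $t<k-1$.
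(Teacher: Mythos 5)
Your overall strategy matches the paper's: invoke Lemma~\ref{lem:lem1} to reduce to $(k,t)$-simple-monotone configurations, extract a combinatorial inequality on the level sizes by counting edges between levels, and then optimize. The tail-sum reformulation $w(\C)=\sum_{i\ge 1} m_i$ and the minimization over $N=n_{k-1}$ are a legitimate repackaging of the paper's optimization of $f(\y_{-\tk},\ldots,\y_{k-2})$ under the box constraints $0\le \y_i\le 2\rho$, and for $\rho=1$ your recursion $m_{i-1}\le m_i+2N$ is exactly the paper's per-level bound and does reproduce $\frac{|V|}{2\opt+2\tk+3}(k-1+\opt(\opt+1))$.

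However, there is a genuine gap at exactly the point you flag as "the main obstacle," and it is not a technicality that resolves itself. Your demand/capacity balance on $U_i=\bigcup_{j\ge i}\X_j$ measures the up-demand of each ordinary node as $d_{\min}/2$ but its down-capacity as $d_{\max}/2$; the per-node slack $(d_{\max}-d_{\min})/2$ accumulates over all of $U_i$ and is what produces the $\rho\,m_i$ term in $m_{i-1}\le \rho\,m_i+(\rho+1)n_{k-1}$. Equivalently, your inequality only gives $n_{i-1}\le(\rho-1)m_i+(\rho+1)n_{k-1}$, which for $\rho>1$ permits genuinely geometric growth of the $m_i$ and yields a denominator exponential in the number of levels rather than the linear $2\rho(\opt+\tk+1)+1$ of the statement; no "geometric correction" recovers the theorem from it. The paper avoids this loss by bounding $E(\X_i,\bigcup_{j>i}\X_j)$ directly via a telescoping count kept entirely in terms of the exact degree sums $d(\X_j)$: the total degree of $\bigcup_{j>i}\X_j$ is $\sum_{j>i}d(\X_j)$, each intermediate level $\X_j$ ($i<j<k-1$) consumes at least $d(\X_j)$ of it through its own internal up-edges (counted twice), so what remains for edges to $\X_i$ is at most $d(\X_{k-1})$. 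Only at the very end are $d(\X_i)/2$ and $d(\X_{k-1})$ converted to $x_id_{\min}/2$ and $d_{\max}x_{k-1}$, giving the per-level bound $x_i\le 2\rho x_{k-1}$ uniformly in $i$ with no accumulation. You need that per-level inequality (or an equivalent) before your optimization step can deliver the stated bound; as written, your argument proves the theorem only for regular graphs.
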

\begin{proof}{}
By Lemma \ref{lem:lem1} we can restrict our attention to ($k,t$)-simple-monotone configurations for $G$. Therefore,   the set $V$ can be partitioned into $t+1$ subsets $\X_{-\tk},\X_{-\tk+1}, \ldots, \X_{k-1}$ where $\tk=t-k+1$ and for $i=-\tk,-\tk+1,\ldots,k-1$, $\X_i$  denotes the set of nodes whose weight at round $j$ is  $\max(0,\min(j+i,k-1))$. Henceforth, we denote the size of $\X_i$ by $\x_i$ and  the sum of the degree of nodes in  $A\subseteq V$ by $d(A)$.
\\
In order to prove the theorem, we first show that, for each $i=-\tk,-\tk+1,\ldots,k-2$, it holds
\begin{equation}\label{x2x}\x_i \leq 2\rho\x_{k-1}.\end{equation}

Let  $E(A,B) =|\{e=(u,v)\in E \  \colon \ u\in A  \mbox{ and }  v \in B\}|$ denote the number of edges between a node  in $A$ and one in $B$.
Each node  $v\in \X_i$ must increase its weight for each round $r$ such that $0<r+i<k-1$; hence,   at round $r=\max(-i+1,0)$, node $v$ must have at least $\lceil  d(v)/2 \rceil$ neighbors which belong to $\bigcup_{j=i+1}^{k-1} \X_{j}$. Overall the number of edges between $\X_i$ and $ \bigcup_{j=i+1}^{k-1} \X_{j}$ satisfies
\begin{equation}\label{e1}
 E\left(\X_i, \bigcup_{j=i+1}^{k-1} \X_{j}\right)\geq   \frac{d(X_i)}{2}\geq  \frac{|X_i|d_{min}}{2}  =\frac{x_id_{min}}{2},
 \end{equation}
where $d_{min}$ represents the minimum degree of a node in $G$. Moreover, for each $i=-\tk,-\tk+1,\ldots,k-2$, the number of edges between $\X_i$ and $ \bigcup_{j=i+1}^{k-1} \X_{j}$ is \vspace*{-0.1truecm}
\begin{eqnarray*}\label{e2}
E\left(\X_i, \bigcup_{j=i+1}^{k-1} \X_{j}\right) &\leq&   \sum_{j=i+1}^{k-1} d(\X_j) - 2  E\left(\X_{i+1},\bigcup_{j=i+1}^{k-1} \X_{j}\right)
                 -2 E\left(\X_{i+2}, \bigcup_{j=i+2}^{k-1} \X_{j}\right) - \ldots \\
& &\qquad\qquad
    \ldots  -2 E\left(\X_{k-2}, \X_{k-2} \cup \X_{k-1}\right)  -2 E\left(\X_{k-1}, \X_{k-1}\right) \\
&\leq& \sum_{j=i+1}^{k-1} d(\X_j) - 2 \left[ E\left(\X_{i+1},\bigcup_{j=i+2}^{k-1} \X_{j}\right) + E\left(\X_{i+2}, \bigcup_{j=i+3}^{k-1} \X_{j}\right) +
\right. \\
& &\qquad\qquad \left.\ldots + E(\X_{k-2}, \X_{k-1}) \right ] \\
&\leq&   \sum_{j=i+1}^{k-1} d(\X_j) -2\left[ d(\X_{i+1})/2  + d(\X_{i+2})/2  + \ldots +  d(\X_{k-2})/2   \right ]  \\
&=& d(\X_{k-1}) \leq d_{max}|X_{k-1}|=d_{max}x_{k-1},
\end{eqnarray*}
where $d_{max}$ is the maximum node degree of a node in $G$. By this and  (\ref{e1}), recalling that $\rho=d_{max}/d_{min}$, we get  (\ref{x2x}).

\medskip
  Define  now $\y_i=\x_i/\x_{k-1}$. 
  By (\ref{x2x}),  $0 \leq \y_i \leq 2\rho$.
Our goal is to minimize the weight function
$ w(\C)=\sum_{j=1}^{k-1}j\x_j = \x_{k-1} \left((k-1)+\sum_{j=1}^{k-2} j\y_j   \right) $ with $|V|=\sum_{j=-\tk}^{k-1}\x_j = \x_{k-1} \left(1+\sum_{j=-\tk}^{k-2}\y_j  \right).$
Hence, $\x_{k-1}=\frac{|V|}{1+\sum_{j=-\tk}^{k-2}\y_j }$ and we can write
\begin{equation} \label{wf}
 w(C)= |V| \times \frac{ k-1+\sum_{j=1}^{k-2} j\y_j}{1+\sum_{j=-\tk}^{k-2}\y_j }.
\end{equation}

We distinguish now two cases depending on whether $t \geq k-1$ or $t<k-1$.

\medskip
\noindent
\textbf{Case I ($t\geq k-1$):}
In this case, it is possible to show that the rightmost term of (\ref{wf}) is minimized when
\begin{equation}
y_i=\begin{cases}{2\rho} & {\hbox{ if } -s\leq i\leq \opt} \cr
 {0} & {\hbox{ if } \opt < i\leq k-2,}\end{cases}\end{equation}
where $\opt{=}\left\lfloor\frac{\sqrt{(2\rho\tk+ \rho +1)^2+4\rho(k-1)}-(2\rho\tk+ \rho +1)}{2\rho}\right\rfloor$ is the floor of the positive root of the  equation $\rho i^2 +(2\rho\tk{+} \rho {+}1)i -(k{+}1).$\\
Let  $f(\y_{-\tk},\y_{-\tk+1}, \ldots, \y_{k-2})=\frac{k-1+\sum_{j=1}^{k-2} j\y_j}{1+\sum_{j=-\tk}^{k-2}\y_j }.$ This function is decreasing in $y_i$ for each $-\tk \leq i \leq 0$.
Hence, since $0 \leq \y_j \leq 2\rho$ for each $j$,\\
\centerline{$f(\y_{-\tk},\y_{-\tk+1}, \ldots,\y_0,\y_1, \ldots, \y_{k-2}) \geq  f(2\rho,2\rho,\ldots,2\rho,\y_1, \ldots, \y_{k-2}).$}
\smallskip \noindent
Moreover, we  show that the following two inequalities hold: \vspace*{-0.3truecm}
\begin{eqnarray}  \label{eq1}
\nonumber	f(2\rho,2\rho, \ldots,2\rho,\y_1, \ldots, \y_{\opt},\ldots, \y_{k-2}) &\geq&  f(2\rho,2\rho, \ldots,2\rho,\y_2, \ldots, \y_{\opt},\ldots, \y_{k-2}) \\
\nonumber	 &\geq&  f(2\rho,2\rho, \ldots,2\rho,\y_3, \ldots, \y_{\opt},\ldots, \y_{k-2}) \\
\nonumber &\geq& ...\\
	&\geq& f(2\rho,2\rho, \ldots, 2\rho, \y_{\opt+1},\ldots \y_{k-2})
\end{eqnarray}
\begin{eqnarray} \label{eq2}\vspace*{-0.4truecm}
\nonumber f(2\rho,2\rho, \ldots, 2\rho, \y_{\opt+1},\ldots \y_{k-2}) &\geq& f(2\rho,2\rho, \ldots, 2\rho, \y_{\opt+1},\ldots \y_{k-3},0)\\
\nonumber &\geq& f(2\rho,2\rho, \ldots, 2\rho, \y_{\opt+1},\ldots \y_{k-4},0,0) \\
\nonumber &\geq& \ldots\\
&\geq&  f(2\rho,2\rho, \ldots, 2\rho, 0,0,\ldots,0).\vspace*{-0.2truecm}
\end{eqnarray}
%
We first prove (\ref{eq1}). Each inequality  in (\ref{eq1}) is obtained by considering  the following one for some  $i\leq \opt$ (recalling that $\opt$ is the floor of the positive root of the  equation $\rho i^2 +(2\rho\tk{+} \rho {+}1)i -(k{+}1)$) \vspace*{-0.2truecm}
\begin{eqnarray} \label{eq3}
	f(2\rho, \ldots, 2\rho, \y_{i},\ldots, \y_{k-2}){=} \frac{A{+}i\y_i}{B{+}\y_i} {\geq} \frac{A{+}2\rho i}{B{+}2\rho}{=}f(2\rho, \ldots, 2\rho, \y_{i+1},\ldots, \y_{k-2})\vspace*{-0.2truecm}
\end{eqnarray}
where $A=k{-}1+\sum_{j=i+1}^{k-2} j\y_j + \rho i(i-1)$ and $B=1+\sum_{j=i+1}^{k-2}\y_j +2\rho(i+\tk).$

\noindent
\\
We notice that (\ref{eq3}) is satisfied whenever $\y_i(A-iB)\leq 2\rho(A-iB) $ and that for $i\leq \opt$
\vspace*{-0.2truecm}
\begin{eqnarray*}
A-iB &=&  k-1+\sum_{j=i+1}^{k-2} j\y_j + \rho i(i-1) - i\left( 1+\sum_{j=i+1}^{k-2}\y_j +2\rho(i+\tk) \right)\\
     &=& k-1+  \sum_{j=i+1}^{k-2} (j-i) \y_j	+ \rho i^2-\rho i-i-2\rho i^2-2\rho i\tk \\	
     &\geq&  -\rho i^2 -(2\rho\tk+ \rho +1)i +k-1 \geq 0.
 \end{eqnarray*}
Hence, 	(\ref{eq3}) and consequently  (\ref{eq1}) are satisfied. In order to get  (\ref{eq2}),  we show that for each $i > \opt$
\vspace*{-0.2truecm}
\begin{eqnarray} \label{eq4}
\hspace*{-0.5truecm}\nonumber	f(2\rho, \ldots, 2\rho, \y_{\opt+1},\ldots, \y_{i}, 0,\ldots,0) &{=}& \frac{C{+}i\y_i}{D{+}\y_i} 	 \\ &{\geq}& \frac{C}{D}{=} f(2\rho, \ldots, 2\rho, \y_{\opt+1},\ldots, \y_{i-1},0, \ldots,0)
\end{eqnarray}
where $C=k{-}1+\sum_{j=\opt+1}^{i-1} j\y_j + \rho \opt(\opt+1)$ and $D=1+\sum_{j=\opt+1}^{i-1}\y_j +2\rho(\tk{+}\opt{+}1)$.

\noindent
Since  (\ref{eq4}) is satisfied whenever $ \y_i (C-iD)\leq 0$ and since  now  $i> \opt$ we get
\begin{eqnarray*}\vspace*{-0.2truecm}
  C-iD   &=&  k-1+\sum_{j=\opt+1}^{i-1}j\y_j+\rho \opt(\opt+1)-i\left(1+\sum_{j=\opt+1}^{i-1}\y_j+2 \rho (\tk+\opt+1)\right)\\
  &\leq&  k-1+ \rho \opt^2+ \rho \opt - (\opt+1)-2 \rho(\opt+1)\tk-2 \rho(\opt+1)\opt- 2 \rho(\opt+1)\\
   &=& -\rho \opt^2 -(2\rho\tk+3\rho+1)\opt +k -2\rho \tk -2 \rho -2 \leq 0.
\end{eqnarray*}
 Hence,	(\ref{eq4}) and consequently  (\ref{eq2}) are satisfied.
Summarizing, we have that the minimizing values are
$$ \x_{i} = \left\{  \begin{array}{l l}
    \frac{|V|}{1+\sum_{j=-s}^{k-2}\y_j }=\frac{|V|}{2 \rho(\opt+\tk+1)+1}, & \quad \text{for $i=k-1$} \\
   2 \rho  x_{k-1}=\frac{2\rho|V|}{2 \rho(\opt+\tk+1)+1} & \quad \text{for $i=-s,-\tk+1,\ldots,\opt$} \\
    0 & \quad \text{otherwise.}
  \end{array} \right.$$
Therefore,
\begin{eqnarray*}
\sum_{j=1}^{k-1}j\x_j
 &=&\frac{|V|}{2 \rho(\opt{+}\tk{+}1){+}1}\left(k{-}1{+}2 \rho \sum_{j=1}^{\opt}j\right)
 =\frac{|V|}{2 \rho(\opt{+}\tk{+}1){+}1}\times \left( k{-}1{+}\rho\opt(\opt{+}1)  \right),
\end{eqnarray*}
and we can conclude that $w(\C)\geq \frac{|V|}{2 \rho(\opt+\tk+1)+1}\times \left( k-1+\rho \opt(\opt+1)  \right)$, when $t\geq k-1$. 

\medskip
\noindent
\textbf{Case II ($t< k-1$):} The proof of this case is left to the reader. 
\end{proof}

\begin{corol} \label{cor1}
Consider an undirected connected $d$-regular graph  $G=(V,E)$. Let $k\geq2$ and $t\geq1$ be integers.
Any ($k,t$)-dynamo $\C$, with $\lambda=1/2$, has weight
$$ w(\C)\geq  \left\{  \begin{array}{l  l}
     \frac{|V|}{2\opt+2\tk+3}\times \left(k{-}1{+}\opt(\opt{+}1)  \right) \text{ where } \opt=\lfloor\sqrt{t{+}1{+}\tk^2{+}\tk}\rfloor{-}(\tk{+}1) & \text{if  $t\geq k-1$} \\ \\
     \frac{|V|}{2\opt+2\tk+3}\times \left(k{-}1{+}\opt(\opt{+}1)-\tk(\tk{+}1)  \right) \text{ where } \opt=\lfloor\sqrt{t{+}1}\rfloor{-}(\tk{+}1) &\text{otherwise,}
  \end{array} \right.$$
where $\tk=t-k+1.$
\end{corol}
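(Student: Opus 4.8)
The plan is to derive this corollary directly from Theorem~\ref{th1} by exploiting the fact that all degrees of a $d$-regular graph are equal. First I would observe that $d_{max}=d_{min}=d$, so the ratio $\rho=d_{max}/d_{min}$ equals $1$. Since the hypotheses of Theorem~\ref{th1} (undirected and connected) are satisfied, the corollary should follow simply by substituting $\rho=1$ into both branches of the bound and simplifying the resulting expressions.

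For the case $t\ge k-1$, substituting $\rho=1$ turns the denominator $2\rho(\opt+\tk+1)+1$ into $2\opt+2\tk+3$ and the multiplicative factor $k-1+\rho\opt(\opt+1)$ into $k-1+\opt(\opt+1)$, which already matches the claimed expression. The only genuine computation is the simplification of $\opt$: with $\rho=1$ the radicand $(2\rho\tk+\rho+1)^2+4\rho(k-1)$ becomes $(2\tk+2)^2+4(k-1)=4(\tk^2+2\tk+k)$, so the square root contributes a factor $2$ that cancels the $2\rho=2$ in the denominator of the floor argument, leaving $\sqrt{\tk^2+2\tk+k}-(\tk+1)$. Using $\tk=t-k+1$, hence $k=t+1-\tk$, I would rewrite $\tk^2+2\tk+k$ as $t+1+\tk^2+\tk$, giving $\opt=\lfloor\sqrt{t+1+\tk^2+\tk}\rfloor-(\tk+1)$ as required.

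For the case $t<k-1$, the same substitution yields the denominator $2\opt+2\tk+3$ and the factor $k-1+\opt(\opt+1)-\tk(\tk+1)$. Here $\rho=1$ makes the term $(\rho-1)^2$ vanish and reduces $4\rho(t+1)$ to $4(t+1)$, so the square root again produces a factor $2$ that cancels, leaving $\opt=\lfloor\sqrt{t+1}\rfloor-(\tk+1)$, matching the statement.

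Since everything reduces to evaluating Theorem~\ref{th1} at $\rho=1$, I do not expect a real obstacle in this argument. The one point requiring a little care --- and the closest thing to a subtlety --- is pulling the integer $\tk+1$ outside the floor: this is legitimate because $\lfloor x-m\rfloor=\lfloor x\rfloor-m$ whenever $m$ is an integer, and $\tk+1=t-k+2$ is indeed an integer. Matching the two resulting closed forms against the two branches of the statement then completes the proof.
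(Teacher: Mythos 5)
Your proposal is correct and is exactly the intended derivation: the paper states the corollary immediately after Theorem \ref{th1} with no separate argument, precisely because it is the specialization $\rho=d_{max}/d_{min}=1$ for regular graphs. Your algebraic simplifications of both branches (including rewriting $\tk^2+2\tk+k$ as $t+1+\tk^2+\tk$ via $k=t+1-\tk$, and pulling the integer $\tk+1$ out of the floor) check out.
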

\vspace*{-0.2truecm}
We are now able to answer the question: {\em Which is the smallest value of $t$ such that the optimal dynamo contains only two weights?}
By analyzing the value of $\opt$ in the case $t\geq k-1$ we have that whenever  $t > \frac{k (2 \rho +1)- 2\rho -4}{2 \rho}$ then $\opt=0$, hence only the weights $0$ and $k-1$ will appear in the optimal configuration. When $\rho=1$ (i.e., on regular graphs) one has $t > \frac{3}{2}k-3$.

\vspace*{-0.2truecm}
\begin{remark}
Our result generalizes the one in {\em \cite{Santoro03}} with $k=2$. Indeed, when  $t\geq k-1=1$  by the above consideration we get $t>\frac{3}{2}k-3=0$ and $\opt=0$. Hence,   $w(\C)\geq \frac{|V|}{2s+3} \times(k-1)=\frac{|V|}{2t+1}$.
\end{remark}

\begin{theorem} \label{thm:t_large}
Let $G=(V,E)$ be an undirected connected graph, if $t$ is sufficiently large, then:\vspace*{-0.2truecm}
\begin{description}	
\item [(i)] any optimal $(k,t)$-dynamo   contains only the weights $0$ and $k-1$; 	
\item [(ii)] let $k \geq 2$ be an integer  and  $\C_2$  a 2--dynamo on $G$. Let $\C_k$ be  obtained from $\C_2$ by replacing the weight $1$ with the weight  $k-1$.  If $\C_2$ is an optimal $2$-dynamo then $\C_k$ is an optimal $k$-dynamo. Moreover, $w(\C_k)=w(\C_2)\times (k-1)$ and  $t(\C_k) =t(\C_2)+k-2$ (where $t(\C)$ is the time needed to reach the final configuration). 	\vspace*{-0.2truecm}
\end{description}\noindent Proof omitted.
\end{theorem}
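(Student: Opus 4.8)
\medskip\noindent\textbf{Proof proposal.}
The plan is to route everything through a single observation — a ``level decomposition'' of a $k$--dynamo into $k-1$ nested $2$--dynamos — from which the weight identity, the optimality, and part (i) all follow; the time bound is then obtained by an explicit simulation of the dynamics of $\C_k$. This bypasses the simple--monotone machinery of Lemmas \ref{lem:ub}--\ref{lem:lem1} for the optimality claim.

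First I would set up notation. For an arbitrary $k$--dynamo $\C$ on $G$ and each threshold $h\in\{1,\dots,k-1\}$ put $T_h=\{v\in V:\ \ll{v}{0}\ge h\}$, so that $T_1\supseteq T_2\supseteq\cdots\supseteq T_{k-1}$ and $w(\C)=\sum_{h=1}^{k-1}|T_h|$. The key lemma is that each $T_h$, read as a set of initially--active nodes, is a $2$--dynamo. To see this, let $B^h_j=\{v:\ \ll{v}{j}\ge h\}$ be the level-$h$ set at round $j$ under $\C$, and let $D_j$ be the active set at round $j$ of the simple--majority process seeded by $T_h$. Since weights rise by at most one per round, a node enters $B^h$ at round $j$ only if at least $\lceil d(v)/2\rceil$ of its neighbours were already in $B^h_{j-1}$; an easy induction then gives $B^h_j\subseteq D_j$ for all $j$. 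As $\C$ is a $k$--dynamo, eventually $B^h_\infty=V$, whence $D_\infty=V$ and $T_h$ is a $2$--dynamo. Consequently $|T_h|\ge w(\C_2)$ for every $h$ (recall $\C_2$ is optimal, hence a minimum--weight $2$--dynamo), and therefore $w(\C)\ge (k-1)\,w(\C_2)$ for every $k$--dynamo $\C$.

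Next I would analyse $\C_k$ itself. Writing $S$ for the weight--$1$ set of $\C_2$ and $r_v$ for the first round at which $v$ reaches weight $1$ under $\C_2$ (so $r_v=0$ iff $v\in S$ and $t(\C_2)=\max_v r_v$), I would prove by induction on the round $j$ that under $\C_k$ each $u\in S$ keeps weight $k-1$, while each $v\notin S$ satisfies $\ll{v}{j}=\max\bigl(0,\min(j-r_v+1,\,k-1)\bigr)$. The inductive step has two parts: the set of positive--weight nodes at round $j$ coincides with the round-$j$ active set of $\C_2$ (so $v$ first turns positive exactly at round $r_v$, by the very majority count that activated it in $\C_2$); and once $v$ is positive, the $\ge\lceil d(v)/2\rceil$ neighbours that activated strictly earlier keep a weight at least one unit above $\ll{v}{\cdot}$, forcing $v$ to increment every round until it saturates at $k-1$ at round $r_v+k-2$. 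Maximising over $v$ yields $t(\C_k)=t(\C_2)+k-2$. Since $w(\C_k)=(k-1)|S|=(k-1)\,w(\C_2)$ matches the lower bound of the previous paragraph, $\C_k$ is an optimal $k$--dynamo, proving (ii).

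Finally, part (i). For $t$ at least the convergence time $t(\C_2)+k-2$ of $\C_k$, the configuration $\C_k$ is a feasible $(k,t)$--dynamo, so the optimal $(k,t)$--weight equals the bound $(k-1)w(\C_2)$. Hence any optimal $(k,t)$--dynamo $\C$ satisfies $\sum_{h=1}^{k-1}|T_h|=(k-1)w(\C_2)$ with each $|T_h|\ge w(\C_2)$, forcing $|T_1|=\cdots=|T_{k-1}|=w(\C_2)$; combined with the nesting $T_1\supseteq\cdots\supseteq T_{k-1}$ this gives $T_1=\cdots=T_{k-1}$, i.e. every positive node already has weight $k-1$, so $\C$ uses only the weights $0$ and $k-1$. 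The main obstacle is the round--induction for $\C_k$: one must simultaneously rule out premature activation (no $v$ turns positive before round $r_v$) and rule out stalling (once positive, $v$ never halts below $k-1$), keeping the positive--weight front synchronised with the $2$--dynamo throughout; this synchronisation is exactly what converts the qualitative level decomposition into the precise time bound $t(\C_2)+k-2$.
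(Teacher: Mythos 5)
Your proposal is correct, but it takes a genuinely different route from the paper's. The paper proves (i) inside the simple--monotone framework of Lemmas \ref{lem:ub} and \ref{lem:lem1}: starting from an optimal dynamo containing an intermediate weight, it re-indexes the partition $\{\X_i\}$ so that all intermediate levels collapse to weight $0$, obtaining a strictly lighter simple--monotone configuration that is still a $k$-dynamo (converging $k-2$ rounds later), which contradicts optimality; part (ii) is then deduced from (i) by mapping the optimal $k$-dynamo down to a $2$-dynamo lighter than $\C_2$. Your level decomposition $T_h=\{v : \ll{v}{0}\geq h\}$ into $k-1$ nested $2$-dynamos appears nowhere in the paper, and it buys more: a lower bound $w(\C)\geq (k-1)\,w(\C_2)$ valid for \emph{every} $k$-dynamo, an explicit value of the optimal weight for large $t$, an explicit threshold $t\geq t(\C_2)+k-2$ giving concrete meaning to ``sufficiently large'', and both (i) and (ii) as consequences of the equality case of a single inequality, whereas the paper must establish (i) before it can prove (ii). The inclusion $B^h_j\subseteq D_j$ is sound because a node enters level $h$ at round $j$ exactly when at least $\lceil d(v)/2\rceil$ neighbours sit in $B^h_{j-1}$, which is the activation rule of the $2$-process seeded by $T_h$; and your round induction for $\C_k$, organized by the activation time $r_v$, closes because the activating neighbours of $v$ have strictly smaller $r_u$ and hence weight at least $\min(j-r_v+2,k-1)>\ll{v}{j}$ until $v$ saturates. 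The only loose end is degenerate: if $\C_2$ assigns weight $1$ to every node then $t(\C_2)=0$ and the identity $t(\C_k)=t(\C_2)+k-2$ fails; this cannot occur for an optimal $\C_2$ on a connected graph with at least two nodes (dropping any single node still yields a $2$-dynamo), but it deserves a sentence since the ``Moreover'' clause is stated for a general $\C_2$.
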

\vspace*{-0.6truecm}
\section{Building ($k,t$)-dynamo}\label{sec4}\vspace*{-0.2truecm}
In this section we provide several optimal (or almost  optimal) ($k,t$)-dynamo constructions for Rings and Tori ($\lambda =1/2$ ) and Cliques (any $\lambda$).

\subsection{Rings} \vspace*{-0.2truecm}
A $n$-node ring $\R{n}$ consists of $n$ nodes and $n-1$ edges, where for $i=0,1,\ldots,n-1$ each node $v_i$ is connected with $v_{(i-1)\bmod n}$ and $v_{(i+1)\bmod n}$.

A necessary condition for $\C(\R{n},k)$ to be a $k$-dynamo ($\lambda\leq 1/2$) is that at least one node of $\R{n}$ is weighted by $k-1$. This condition is also sufficient.

\begin{figure}[th!]
	\centering
		\includegraphics[width=0.60\linewidth]{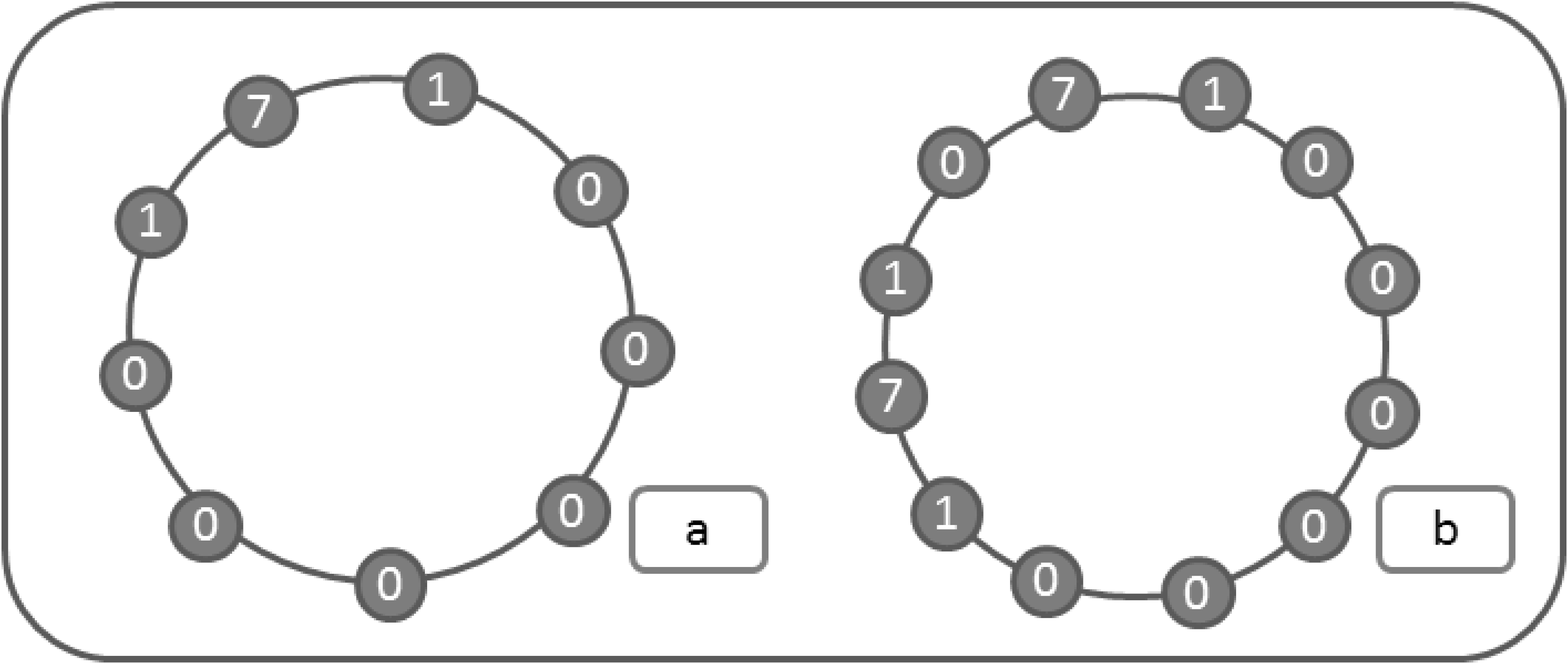}
	\hspace{0.06truecm}
		\includegraphics[width=0.60\linewidth]{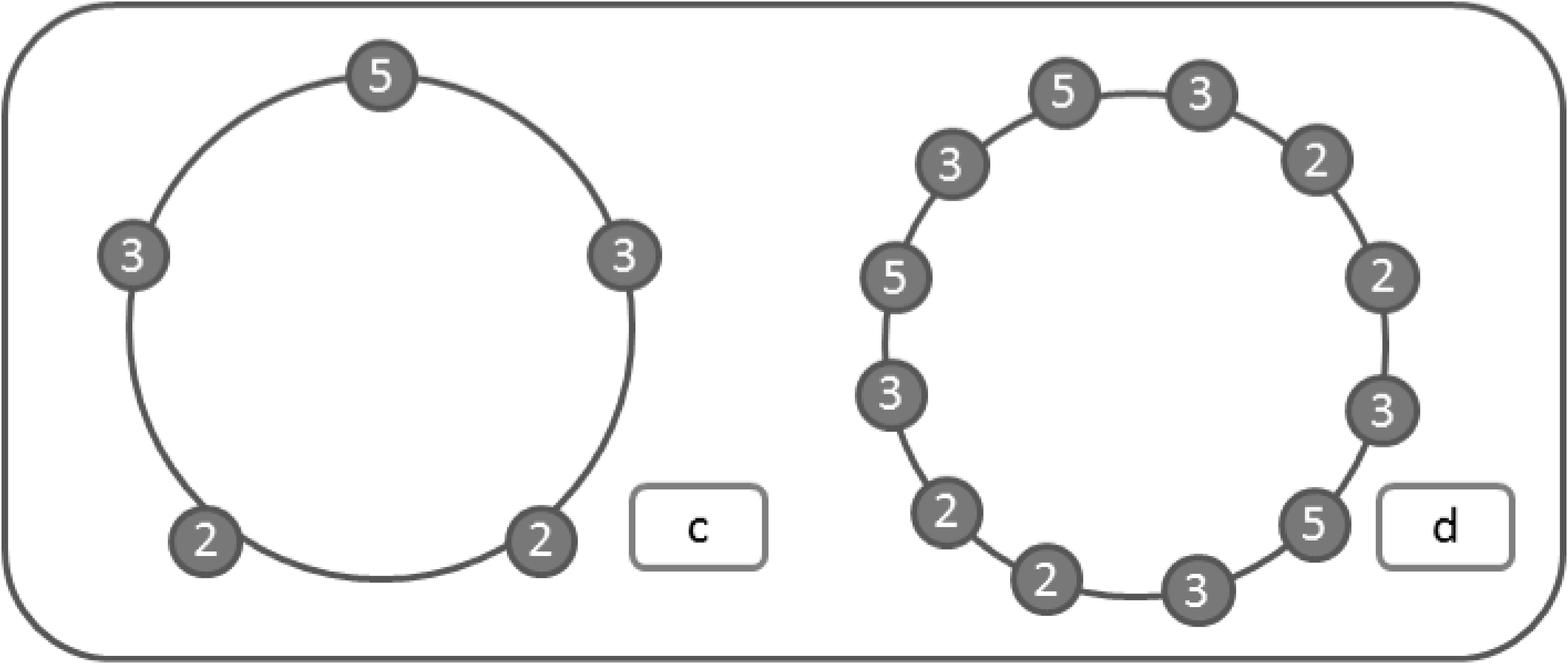}
		\vspace*{-0.6truecm}
		\caption{ {\bf ($k,t$)-dynamos  on Rings}: (a) $\C(\R{9},8,9)$, a ($8$,$9$)-dynamo  on $\R{9}$ ($\opt=1$), in this particular case $n=2\opt+2\tk+3$; (b) $\C(\R{12},8,9)$ a ($8$,$9$)-dynamo  on $\R{12}$ ($\opt=1$);
		(c) $ \C(\R{5},6,3)$, a ($6$,$3$)-dynamo  on $\R{5}$ ($\opt=3$), in this particular case $n=2\opt+2\tk+3$; (d) $ \C(\R{12},6,3)$,  a ($6$,$3$)-dynamo  on $\R{12}$  ($\opt=3$).\label{rings} \vspace*{-0.5truecm}	}
\end{figure}

\begin{theorem}
An optimal $k$-dynamo ($\lambda\leq 1/2$) $\C(\R{n},k)$  has weight $w(\C(\R{n},k))=(k-1)$, and it reaches its final configuration within  $t=k-2+\lceil{\frac{n-1}{2}}\rceil$ rounds.
\end{theorem}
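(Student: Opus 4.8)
The plan is to treat the weight and the running time separately, exploiting that on $\R{n}$ every node has degree $2$, so for $\lambda\le 1/2$ the activation threshold is $\lceil\lambda d(v)\rceil=\lceil 2\lambda\rceil=1$: a node gains one unit of weight as soon as it has a single strictly heavier neighbour.

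For the weight I would first record the invariant that the maximum weight present in the network never changes. A node sitting at the current maximum $M$ has no strictly heavier neighbour, so it never increases; and no node can ever exceed $M$, since increases are by exactly one and are triggered only by a strictly heavier neighbour, so reaching $M+1$ would require first standing at $M$ with a neighbour above $M$. Hence the final all-$(k-1)$ configuration forces the initial configuration to already contain a node of weight $k-1$ -- precisely the necessary condition recalled just before the statement. As all weights are non-negative integers, $w(\C(\R{n},k))\ge k-1$, and equality forces exactly one node at weight $k-1$ with all others at $0$; by the symmetry of the ring the choice of this node is immaterial, which pins down the unique optimal configuration whose time I then analyse.

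For the time I would put the weight-$(k-1)$ node at $v_0$ and track the propagation. Writing $\ll{v_j}{r}$ for the weight at round $r$ of the node at ring-distance $j$ from $v_0$, I claim $\ll{v_j}{r}=\max\bigl(0,\min(r-j+1,\,k-1)\bigr)$ for all $r\ge 0$ and $j\ge 1$, proved by induction on $j$ with the fixed source $v_0$ (weight $k-1$ at every round) as the base. The inductive step first establishes that weights are non-increasing in $j$ along each of the two arcs leaving $v_0$, so the neighbour that drives $v_j$ upward is always the one closer to the source, namely $v_{j-1}$; the rule then reduces to ``$v_j$ gains one unit exactly at each round in which $\ll{v_{j-1}}{r-1}>\ll{v_j}{r-1}$,'' which unwinds to the claimed staircase. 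Setting $r-j+1=k-1$ shows that the node at distance $j$ first reaches $k-1$ at round $k-2+j$; the farthest node, at distance $D=\lceil\frac{n-1}{2}\rceil=\lfloor n/2\rfloor$ (the eccentricity of $v_0$ in $\R{n}$), therefore reaches it at round $k-2+D=t$, and no node reaches it later, so the configuration stabilises in exactly $t$ rounds.

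The main obstacle is the inductive step for the time, specifically justifying that each node's increases are governed solely by its neighbour toward the source. Two points need care: proving the monotone-in-distance ordering of weights that underlies this, so that the arc-away neighbour never overtakes a node and never accelerates it; and handling the antipodal node(s) where the two propagating waves meet, since for even $n$ the unique node $v_{n/2}$ has both neighbours at distance $n/2-1$ and must be treated symmetrically, while for odd $n$ there are two farthest nodes. Once the ordering is in place these cases collapse into the single distance-based formula, and tightness of the time bound follows automatically: the formula gives weight strictly below $k-1$ at the distance-$D$ node before round $t$, which is consistent with the fact that a weight can rise by at most one per round, so the influence of the lone heavy node cannot traverse distance $D$ in fewer than $k-2+D$ rounds.
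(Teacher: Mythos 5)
Your proposal is correct and follows the same reasoning the paper leaves implicit (the paper states the theorem with no written proof beyond the preceding remark that a weight-$(k-1)$ node is necessary and sufficient): the invariance of the maximum weight gives the lower bound $w\geq k-1$, and the staircase formula $\ll{v_j}{r}=\max(0,\min(r-j+1,k-1))$ is exactly the profile of Lemma \ref{lem:ub} specialised to the ring with $\X_{k-1}=\{v_0\}$, yielding convergence at round $k-2+\lceil(n-1)/2\rceil$. The one step you flag as an obstacle (circularity between the distance-monotonicity and the induction on $j$) is easily discharged by splitting the claim into a lower bound, which needs only the neighbour toward the source because extra heavier neighbours can never inhibit an increase, and an upper bound from the fact that a node at distance $j$ cannot move before round $j$ and gains at most one unit per round.
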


A ($k,t$)-dynamo ($\lambda=1/2$)  for a ring $\R{n}$ is obtained by the following  partition of $V$ which defines the initial configuration (see Figure \ref{rings}) $\C(\R{n},k,t)$: for $i=0,1,\ldots,n$,
$$ \forall v_i\in \R{n}, \ \ v_i \in \left\{  \begin{array}{l l}
    \X_{k-1} & \quad \text{if $j=0$}\\
   \X_{\opt+1-j} & \quad \text{if $1\leq j \leq \opt+\tk+1$}\\
    \X_{j-\opt-2\tk-2} & \quad \text{if $\opt +\tk+2\leq j \leq 2\opt+2\tk+2$}
  \end{array} \right.$$
where $\tk=t-k+1,$ $j=i\bmod(2\ell+2\tk+3)$ and $\opt=\lfloor\sqrt{t+1+\tk^2+\tk}\rfloor-(\tk+1)$ if $t\geq k-1$ and $\opt=\lfloor\sqrt{t+1}\rfloor-(\tk+1)$ otherwise.

\begin{theorem} \label{th:ring_irr}
(i) The configuration $\C(\R{n},k,t)$ is a ($k,t$)-dynamo for any value of $n$, $\lambda=1/2$, $k\geq2$ and $t\geq 1$.
(ii) The weight of  $\C(\R{n},k,t)$ is
$$ w(\C(\R{n},k,t))\leq  \left\{  \begin{array}{l  l}
    \left \lceil \frac{n}{2\opt+2\tk+3} \right \rceil \left( k{-}1+\opt(\opt{+}1)  \right) &\text{if  $t\geq k-1$} \\   \text{ \ \ \ \ where } \opt=\lfloor\sqrt{t+1+\tk^2+\tk}\rfloor-(\tk{+}1)\\
    \left \lceil \frac{n}{2\opt+2\tk+3}\right \rceil \left( k{-}1+\opt(\opt{+}1)-\tk(\tk{+}1)  \right) &\text{otherwise} \\ \text{ \ \ \ \ where } \opt=\lfloor\sqrt{t+1}\rfloor-(\tk{+}1)
  \end{array} \right.$$
\end{theorem}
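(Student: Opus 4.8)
The plan is to prove part (i) by verifying that the initial configuration $\C(\R{n},k,t)$ is $(k,t)$-simple-monotone in the sense of Definition \ref{def3}; Lemma \ref{lem:ub} then gives at once that it is a $(k,t)$-dynamo, while part (ii) follows by summing the initial weights one period at a time. First I would record the shape of a single period of length $P:=2\opt+2\tk+3$. Over the positions $j=i\bmod P\in\{0,\ldots,P-1\}$, the index assigned to a node equals $k-1$ at $j=0$, then decreases by one from $\opt$ down to $-\tk$ along the range $1\le j\le\opt+\tk+1$, and finally increases by one from $-\tk$ back up to $\opt$ along $\opt+\tk+2\le j\le 2\opt+2\tk+2$. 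A short preliminary computation (using $k\ge2$ and $t\ge1$) shows $0\le\opt\le k-2$ in both regimes, so every index lies in $\{-\tk,\ldots,k-1\}$, the set $\X_{k-1}$ is nonempty since $v_0\in\X_{k-1}$, and condition (i) of Definition \ref{def3}, namely $\ll{v}{0}=\max(i,0)$, holds by construction.

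The heart of part (i) is condition (ii): since $\R{n}$ is $2$-regular and $\lambda=1/2$, each node of index $i<k-1$ must have at least one neighbour of strictly larger index. This is exactly where the monotone profile of a period helps. A node lying in the decreasing range has its left neighbour at a strictly larger index (one unit more, or the value $k-1$ at the very start of the range), and by symmetry a node in the increasing range has its right neighbour at a strictly larger index; the node $v_0$ of each period lies in $\X_{k-1}$ and needs nothing. Thus condition (ii) holds for every node of a complete period.

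The main obstacle is that $n$ need not be a multiple of $P$. Writing $n=qP+r$ with $0\le r<P$, the last block is truncated to the positions $0,\ldots,r-1$, and the ring closes through the edge $\{v_{n-1},v_0\}$. I would argue that this truncation is harmless: every node interior to the partial block still has both neighbours inside the block and is rescued exactly as above, so the only node whose larger-index neighbour could be removed by the cut is $v_{n-1}$ itself. But its wrap-around neighbour is $v_0\in\X_{k-1}$, of index $k-1$, which is strictly larger than any index occurring in the block; hence $v_{n-1}$ satisfies condition (ii) as well, whatever the value of $r$ (and if $r=1$ then $v_{n-1}$ itself lies in $\X_{k-1}$ and needs nothing). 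This shows that $\C(\R{n},k,t)$ is $(k,t)$-simple-monotone for every $n$, and Lemma \ref{lem:ub} completes part (i).

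For part (ii) I would sum $\sum_v\max(i_v,0)$ over one complete period. The node $v_0$ contributes $k-1$; the decreasing range contributes the sum of the positive indices among $\opt,\opt-1,\ldots,-\tk$, and the increasing range contributes the same sum. If $t\ge k-1$ then $-\tk\le0$, only the indices $1,\ldots,\opt$ count, and each range contributes $\opt(\opt+1)/2$, giving a per-period total of $(k-1)+\opt(\opt+1)$. If $t<k-1$ then $-\tk>0$, all indices are positive, and each range contributes $(\opt(\opt+1)-\tk(\tk+1))/2$, giving $(k-1)+\opt(\opt+1)-\tk(\tk+1)$. Finally, with $n=qP+r$ there are $q$ complete periods, each contributing exactly the per-period weight, plus (when $r>0$) one partial period whose weight is at most that of a full period because every summand $\max(i,0)$ is nonnegative. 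Since the number of blocks is $\lceil n/P\rceil$, the total weight is at most $\lceil n/P\rceil$ times the per-period weight, which is precisely the stated bound in each regime.
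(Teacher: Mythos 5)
Your proof is correct and follows the same route as the paper: verify that $\C(\R{n},k,t)$ is $(k,t)$-simple-monotone, invoke Lemma \ref{lem:ub} for part (i), and bound the weight by $\lceil n/(2\opt+2\tk+3)\rceil$ times the weight of one full period for part (ii). The paper simply asserts the simple-monotonicity ``by construction,'' whereas you spell out the neighbour-index check and the truncated last block; this is a welcome elaboration, not a different argument.
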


\begin{proof}{}
(i) By construction $\C(\R{n},k,t)$  is ($k,t$)-simple-monotone, hence by
 Lemma \ref{lem:ub}, $\C(\R{n},k,t)$ is a  ($k,t$)-dynamo.
(ii) There are two cases to consider: if $t\geq k-1$, then starting from $v_0$ each set of $2\opt+2\tk+3$ nodes weights $k-1+2\sum_{i=1}^{\opt}i=k-1+\opt(\opt+1)$.
Then the weight of  $\C(\R{n},k,t)$ is smaller than the weight of $\C(\R{\overline{n}},k,t)$ where $\overline{n}= \lceil\frac{n}{2\opt+2\tk+3}\rceil \times (2\opt+2\tk+3).$
Hence, $w(\C(\R{n},k,t))\leq w(\C(\R{\overline{n}},k,t))  =\left \lceil \frac{n}{2\opt+2\tk+3} \right \rceil \left( k{-}1+\opt(\opt{+}1)  \right)$.
Similarly for $t< k-1.$ 
\end{proof}
By Corollary \ref{cor1} and Theorem \ref{th:ring_irr} we have the following Corollary.
\begin{corol}
When $n/(2\opt+2\tk+3)$ is integer,  $\C(\R{n},k,t)$ is an optimal ($k,t$)-dynamo.
\end{corol}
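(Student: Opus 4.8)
The plan is to prove optimality by a sandwich argument: I would show that the weight of the explicit construction $\C(\R{n},k,t)$ meets, under the divisibility hypothesis, the general lower bound of Corollary \ref{cor1}. First I would observe that the ring $\R{n}$ is a $2$-regular graph with $|V|=n$, so Corollary \ref{cor1} applies with $\rho=1$ and yields a lower bound on the weight of \emph{every} $(k,t)$-dynamo on $\R{n}$ for $\lambda=1/2$. By Theorem \ref{th:ring_irr}(i) the configuration $\C(\R{n},k,t)$ is itself a $(k,t)$-dynamo, hence its weight is bounded below by this same quantity.

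Next I would compare this lower bound with the upper bound of Theorem \ref{th:ring_irr}(ii). The two estimates are written with identical values of $\opt$ in each of the two regimes ($t\geq k-1$ and $t<k-1$), and with the same multiplicative factor $k-1+\opt(\opt+1)$, respectively $k-1+\opt(\opt+1)-\tk(\tk+1)$. The only difference is that the lower bound carries the plain fraction $\frac{n}{2\opt+2\tk+3}$ whereas the upper bound carries the ceiling $\lceil\frac{n}{2\opt+2\tk+3}\rceil$. Under the hypothesis that $n/(2\opt+2\tk+3)$ is an integer, the ceiling equals the fraction, so the two bounds coincide exactly. Consequently $w(\C(\R{n},k,t))$ equals the lower bound, which no $(k,t)$-dynamo can undercut; therefore $\C(\R{n},k,t)$ is optimal.

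There is no serious obstacle here, since the lower and upper bounds were engineered to match; the statement is essentially the conjunction of the two preceding results. The only point that needs care is verifying that the expressions for $\opt$ and for the weight factor in Corollary \ref{cor1} and in Theorem \ref{th:ring_irr} really do agree term by term in both the $t\geq k-1$ and the $t<k-1$ cases, so that the collapse of the ceiling to the bare fraction is enough to force equality. Once that bookkeeping is confirmed, optimality follows for both regimes simultaneously.
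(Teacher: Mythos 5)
Your proof is correct and follows exactly the paper's (implicit) argument: the paper derives this corollary directly from Corollary \ref{cor1} (with $\rho=1$ for the $2$-regular ring) and Theorem \ref{th:ring_irr}, noting that the ceiling in the upper bound collapses to the plain fraction under the divisibility hypothesis so the two bounds coincide. No further commentary is needed.
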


\subsection{Tori} \vspace*{-0.2truecm}
A $n \times m$-node tori $\T{n}{m}$ consists of $n\times m$ nodes and $2(n\times m)$ edges, where for $i=0,1,\ldots,n-1$ and $j=0,1,\ldots,m-1$, each node $v_{i,j}$ is connected with four nodes: $v_{i,(j-1)\bmod m}$, $v_{i,(j+1)\bmod m}$, $v_{(i-1)\bmod n,j}$ and $v_{(i+1)\bmod n,j}$.
\begin{figure}[th!]
	\centering
\includegraphics[width=0.90\linewidth]{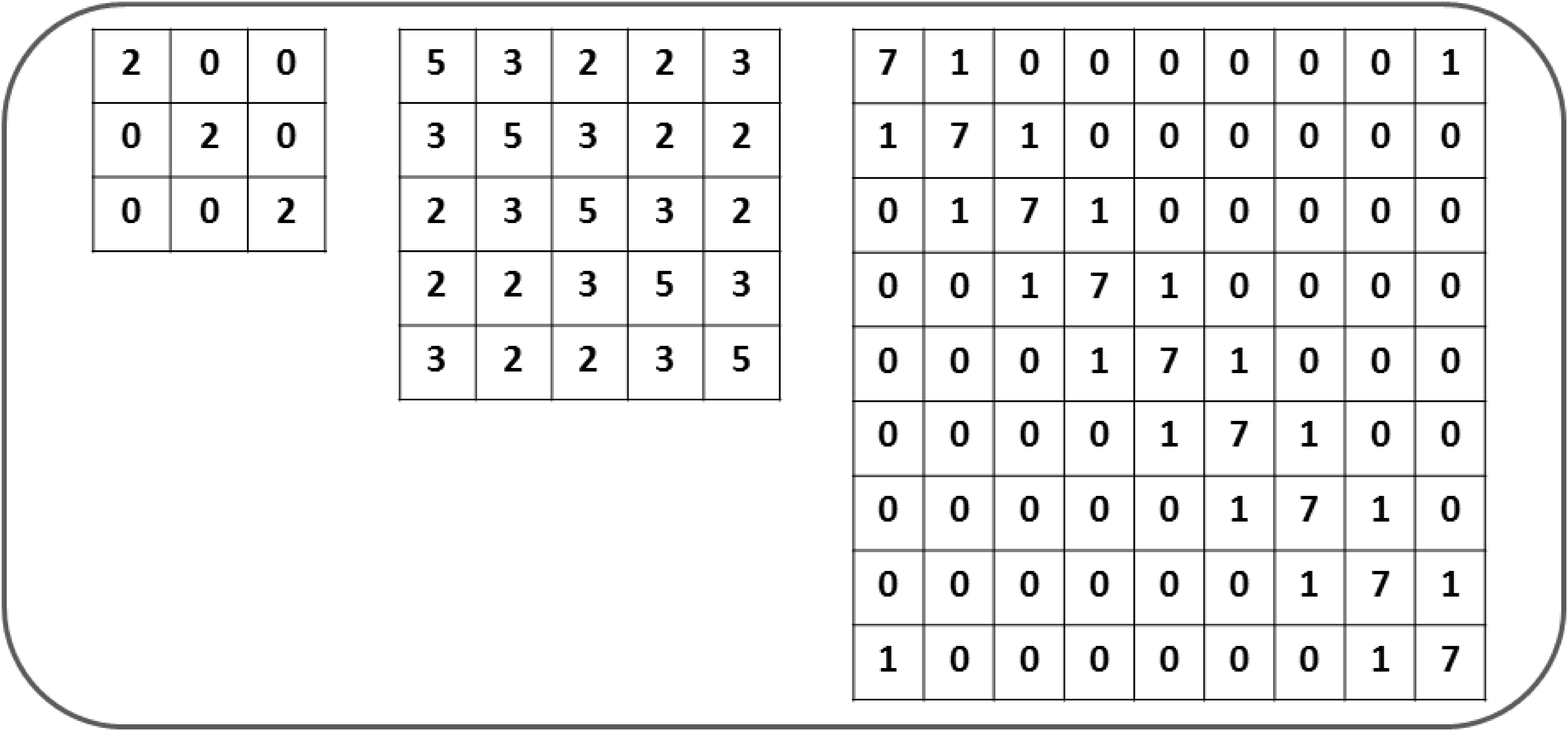}
	\vspace*{-0.3truecm}
		\caption{{\bf ($k,t$)-dynamos  on Tori}: (left) $\C(\T{3}{3},3,2)$, a ($3$,$2$)-dynamo  on $\T{3}{3}$ ($\opt{=}0$);
		(middle) $\C(\T{5}{5},6,3)$, a ($6$,$3$)-dynamo   on $\T{5}{5}$ ($\opt{=}3$);
			(right)	$\C(\T{9}{9},8,9)$ a ($8$,$9$)-dynamo   on $\T{9}{9}$ ($\opt{=}1$).
			\label{tori} \vspace*{-0.4truecm}}
\end{figure}

A ($k,t$)-dynamo ($\lambda=1/2$) for $\T{2\opt+2\tk+3}{2\opt+2\tk+3}$ is obtained by weighting diagonals with the same order defined for dynamos on rings. Specifically, the configuration \\ $\C(\T{2\opt+2\tk+3}{2\opt+2\tk+3},k,t)$ is defined by the partition of $V$ described  as follows, let $D_i=\{v_{a,b}\ : \ i=(b-a) \bmod (2\opt+2\tk +3)  \}$ denote the $i$-th diagonal of $\T{2\opt+2\tk+3}{2\opt+2\tk+3}$, for $i=0,1,\ldots ,2\opt+2\tk+2,$
 $$\forall v \in D_i, \ \ v  \in \left\{  \begin{array}{l l}
    \X_{k-1} & \quad \text{if $i=0$}\\
    \X_{\opt+1-i} & \quad \text{if $1\leq i \leq \opt+\tk+ 1$}\\
     \X_{i-\opt-2\tk-2} & \quad \text{if $\opt+\tk +2\leq i \leq 2\opt+2\tk+2,$}
\end{array} \right.$$
where $\tk=t-k+1,$  $\opt=\lfloor\sqrt{t+1+\tk^2+\tk}\rfloor-(\tk+1)$ if $t\geq k-1$ and $\opt=\lfloor\sqrt{t+1}\rfloor-(\tk+1)$ otherwise.
Some examples are depicted in Figure \ref{tori}.

\begin{theorem}
The configuration $\C(\T{2\opt+2\tk+3}{2\opt+2\tk+3},k,t)$ is an optimal ($k,t$)-dynamo for any $k\geq2$, $t\geq1$ and $\lambda=1/2$.
\end{theorem}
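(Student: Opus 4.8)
The plan is to prove the two inequalities that together yield optimality. Writing $N=2\opt+2\tk+3$ for the side length of the torus, I would first show that $\C(\T{N}{N},k,t)$ is a $(k,t)$-dynamo by checking that it is a $(k,t)$-simple-monotone configuration and applying Lemma~\ref{lem:ub}; I would then show that its weight equals the lower bound of Corollary~\ref{cor1}, so that no $(k,t)$-dynamo can be lighter. Since $\T{N}{N}$ is $4$-regular we have $|V|=N^2$, $d=4$, and $\rho=1$, so that $\lceil\lambda d(v)\rceil=2$ for every node and Corollary~\ref{cor1} applies directly.

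The heart of the matter is condition (ii) of Definition~\ref{def3}, and the key geometric fact is that the diagonal labelling interacts perfectly with adjacency in the torus. For a node $v_{a,b}$ on the diagonal $D_i$ with $i\equiv(b-a)\bmod N$, its four neighbours split so that $v_{a,b+1}$ and $v_{a-1,b}$ lie on $D_{(i+1)\bmod N}$ while $v_{a,b-1}$ and $v_{a+1,b}$ lie on $D_{(i-1)\bmod N}$; thus each node has exactly two neighbours on each of its two neighbouring diagonals. I would then follow the $\X$-index $g(i)$ of $D_i$ around the cycle $i=0,1,\dots,N-1$: it equals $k-1$ on $D_0$, decreases by one step at a time down to the minimum $-\tk$ along $D_1,\dots,D_{\opt+\tk+1}$, and increases by one step at a time back up to $\opt$ along $D_{\opt+\tk+2},\dots,D_{N-1}$ before wrapping around to $k-1$. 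This is a single ``mountain'' on the cycle of diagonals, exactly the profile used for the ring in Theorem~\ref{th:ring_irr}, so every diagonal other than the top $D_0$ has a strictly higher $\X$-index on its peak-ward side. That neighbouring diagonal contributes two of $v$'s four neighbours, so $v$ has at least $2=\lceil\tfrac12\,d(v)\rceil$ neighbours in $\bigcup_{j>g(i)}\X_j$, which is condition (ii); condition (i) holds by construction, and Lemma~\ref{lem:ub} then makes $\C(\T{N}{N},k,t)$ a $(k,t)$-dynamo.

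For optimality I would total the weight diagonal by diagonal. Each $D_i$ has exactly $N$ nodes, and a node assigned to $\X_j$ contributes $\max(j,0)$; summing the peak $D_0$ together with the two symmetric slopes gives $w(\C(\T{N}{N},k,t))=N\bigl(k-1+\opt(\opt+1)\bigr)$ when $t\ge k-1$, and $N\bigl(k-1+\opt(\opt+1)-\tk(\tk+1)\bigr)$ when $t<k-1$ (in the latter case the valley index $-\tk>0$ also contributes). Plugging $|V|=N^2$ and $\rho=1$ into Corollary~\ref{cor1} produces exactly these values, so the upper and lower bounds meet and the configuration is optimal. The one place that needs genuine care --- and the main obstacle --- is the uniform verification of the adjacency-plus-mountain argument across the two seams of the cycle: the wrap-around at $D_0$, and the doubled valley diagonals $D_{\opt+\tk+1}$ and $D_{\opt+\tk+2}$, which share the index $-\tk$; one must confirm that even for nodes on those two diagonals the peak-ward neighbouring diagonal is strictly higher. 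Everything else reduces to the routine arithmetic identity matching the count above against Corollary~\ref{cor1}.
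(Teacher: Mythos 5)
Your proposal is correct and follows essentially the same route as the paper: establish that the configuration is ($k,t$)-simple-monotone so that Lemma~\ref{lem:ub} makes it a ($k,t$)-dynamo, then compute its weight and observe that it meets the lower bound of Corollary~\ref{cor1} for the $4$-regular torus with $\rho=1$. The only differences are presentational --- you verify condition (ii) of Definition~\ref{def3} explicitly via the two-neighbours-per-adjacent-diagonal structure (which the paper dismisses with ``by construction'') and you total the weight diagonal by diagonal rather than row by row, both of which yield the same count $(2\opt+2\tk+3)\left(k-1+\opt(\opt+1)\right)$.
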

\begin{proof}{}
Let $\C=\C(\T{2\opt+2\tk+3}{2\opt+2\tk+3},k,t)$. By construction $\C$ is ($k,t$)-simple-monotone, hence by
 Lemma \ref{lem:ub}, it is a ($k,t$)-dynamo.
To show its optimality we  distinguish two cases.
If $t\geq k-1,$ each row (resp. each column) corresponds to $\C(\R{2\opt+2\tk+3},k,t)$ and  its weight is  $k-1+\opt(\opt+1)$. Overall, $w(\C)= (2\opt+2\tk+3) \times \left( k-1+\opt(\opt+1)\right)$ that matches the bound  in Corollary \ref{cor1}.
Similarly for $t< k-1$. 
\end{proof}


A ($k,t$)-dynamo  for $\T{n}{m}$ is obtained by building a grid $\lceil \frac{n}{2\opt+2\tk+3} \rceil \times \lceil \frac{m}{2\opt+2\tk+3} \rceil$,  where each cell is filled with a configuration $\C(\T{2\opt+2\tk+3}{2\opt+2\tk+3},k,t)$ defined above. Then, the exceeding part is removed and the last row and the last column are updated. In particular,  for each column (resp. row), if the removed part contains a $k-1$, then the element in the last row  (resp. column) is given  the value $k-1$ (see Figure \ref{tori3}).
We call this configuration $\C(\T{n}{m},k,t)$.

\begin{figure}[th!]
	\centering
		\includegraphics[width=0.90\linewidth]{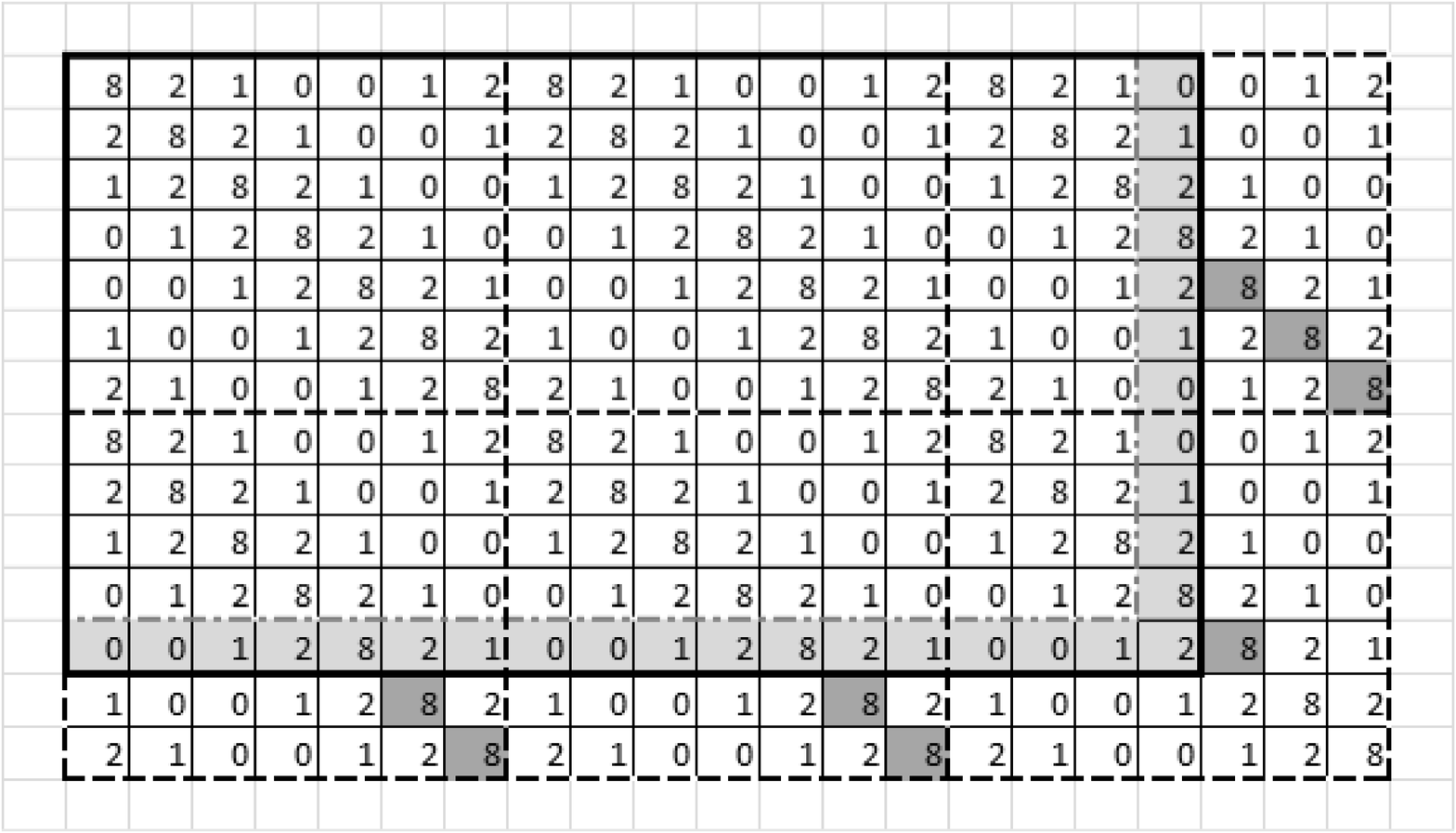}	
		\includegraphics[width=0.90\linewidth]{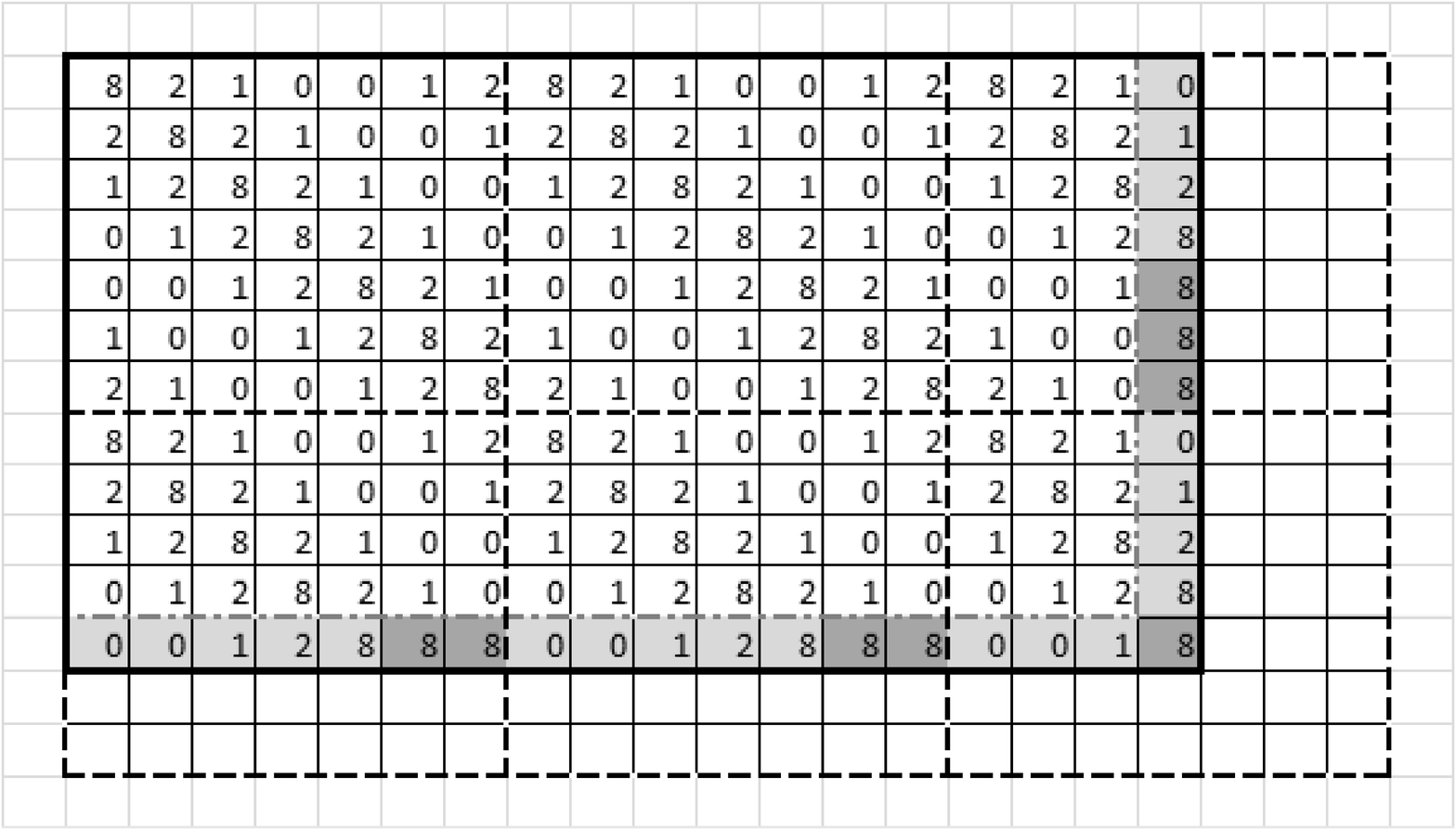}
		\vspace*{-0.3truecm}
		\caption{ $\C(\T{12}{18},9,8)$,  a ($9,8$)-dynamo  on $\T{12}{18}$ ($\opt=2$): (left) a grid $2 \times 3$ is filled with 6 configuration $\C(\T{7}{7},9,8);$ (right) The exceeding parts i.e., the last two rows and the last three columns are removed. Finally the last row and the last column are updated in order to obtain a configuration that satisfies Lemma \ref{lem:ub}.  \label{tori3} \vspace*{-0.4truecm}	}
\end{figure}

\begin{theorem} \label{th:tori_irr}
$\ $ \par\noindent
(i)  $\C(\T{n}{m},k,t)$  is a ($k,t$)-dynamo  for any value of $n$, $m$,$\lambda=1/2$, $k\geq2$ and $t\geq1$.\\
(ii) The weight of  $\C(\T{n}{m},k,t)$  is
$$ w(\C(\T{n}{m},k,t))\leq  \left\{  \begin{array}{l  l}
    \lceil\frac{n}{2\opt{+}2\tk{+}3}\rceil  \lceil\frac{m}{2\opt{+}2\tk{+}3}\rceil  (2\opt{+}2\tk{+}3) \left( k{-}1{+}\opt(\opt{+}1)  \right)& \text{if  $t\geq k-1$}\\ \ \ \ \text{ where } \opt=\lfloor\sqrt{t{+}1{+}\tk^2{+}\tk}\rfloor{-}(\tk{+}1)  \\
    \lceil\frac{n}{2\opt{+}2\tk{+}3}\rceil  \lceil\frac{m}{2\opt{+}2\tk{+}3}\rceil  (2\opt{+}2\tk{+}3) \left( k{-}1{+}\opt(\opt{+}1){-}\tk(\tk{+}1)  \right) & \text{otherwise}\\ \ \ \ \text{ where } \opt=\lfloor\sqrt{t{+}1}\rfloor{-}(\tk{+}1).
  \end{array} \right.$$
\end{theorem}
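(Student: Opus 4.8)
The plan is to prove (i) by showing that $\C(\T{n}{m},k,t)$ is a $(k,t)$-simple-monotone configuration and then invoking Lemma~\ref{lem:ub}, and to prove (ii) by comparing $\C(\T{n}{m},k,t)$ with the un-trimmed tiling on a slightly larger torus whose weight is already known. Throughout I would write $q=2\opt+2\tk+3$ for the side of the base block $\C(\T{q}{q},k,t)$, and recall from the optimality theorem for $\C(\T{q}{q},k,t)$ established above that this block is itself $(k,t)$-simple-monotone and that each of its rows and columns is a copy of the ring configuration $\C(\R{q},k,t)$ of Theorem~\ref{th:ring_irr}.

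For part (i) I would first record the local reason why the base block is simple-monotone: in $\T{q}{q}$ the four neighbours of a node on a diagonal $D_i$ lie two on $D_{i-1}$ and two on $D_{i+1}$, and the ``V-shaped'' assignment of diagonals to the sets $\X_j$ is arranged exactly so that every node of some $\X_j$ with $j\le k-2$ sees two neighbours (on the diagonal pointing towards $D_0$) lying in strictly higher sets, which is the $\lceil d(v)/2\rceil=2$ demanded by Definition~\ref{def3}(ii). For a node of $\T{n}{m}$ all of whose incident edges stay in the interior of the tiling (i.e. away from the trimmed last row and last column and from the torus wrap), this local picture is unchanged, so (ii) holds verbatim. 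The only nodes needing a separate argument are those lying in, or adjacent across the wrap to, the patched last row and last column.

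The main obstacle is precisely this seam. Trimming the $\bar n-n$ excess rows and $\bar m-m$ excess columns (where $\bar n=\lceil n/q\rceil q$ and $\bar m=\lceil m/q\rceil q$) destroys the exact $q$-periodicity, so a seam node may have lost, to a removed row or column, the very neighbour on which its monotonicity depended; after the torus identifies the last row with the first, that neighbour is replaced by one of possibly equal or smaller weight. The patching rule repairs this: whenever the removed portion of a column (resp. row) contained a node of weight $k-1$, the corresponding entry of the last row (resp. last column) is raised to $k-1$. I would verify (ii) for the seam nodes by a short case analysis on the position of a seam node relative to the diagonal carrying the removed $k-1$, showing that the newly created $k-1$ entry supplies the missing high-weight neighbour, and noting that the patched entries themselves, being in $\X_{k-1}$, impose no further requirement. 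Once every node of $\T{n}{m}$ satisfies Definition~\ref{def3}, Lemma~\ref{lem:ub} gives that $\C(\T{n}{m},k,t)$ is a $(k,t)$-dynamo.

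For part (ii) I would bound $w(\C(\T{n}{m},k,t))$ by $w(\C(\T{\bar n}{\bar m},k,t))$. The latter needs no trimming, hence is the genuine $q$-periodic tiling consisting of $\lceil n/q\rceil\lceil m/q\rceil$ disjoint copies of the base block, so its weight is exactly $\lceil n/q\rceil\lceil m/q\rceil\, q\,(k-1+\opt(\opt+1))$ when $t\ge k-1$ and $\lceil n/q\rceil\lceil m/q\rceil\, q\,(k-1+\opt(\opt+1)-\tk(\tk+1))$ otherwise, matching the two claimed right-hand sides. To get $w(\C(\T{n}{m},k,t))\le w(\C(\T{\bar n}{\bar m},k,t))$, observe that $\C(\T{n}{m},k,t)$ arises from the full tiling by deleting nodes (which only lowers the weight) and then raising some last-row/last-column entries to $k-1$. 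Each raise lifts one entry by at most $k-1$ and is triggered by a deleted node of weight $k-1$; since the deleted nodes triggering column-patches (removed rows, retained columns) are disjoint from those triggering row-patches (removed columns, retained rows), the total weight added by patching is at most $(k-1)$ times the number of these deleted maximum-weight nodes, which is in turn at most the total weight removed by deletion. Thus patching never outweighs trimming, the displayed inequality follows, and the bound is obtained. I expect the seam verification in (iii) to be the genuinely delicate step, while the weight comparison in (ii) is a clean charging argument.
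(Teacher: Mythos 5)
Your proposal follows the paper's proof exactly: part (i) is established by checking that the trimmed-and-patched configuration remains $(k,t)$-simple-monotone (the seam analysis you flag as delicate is precisely what the paper asserts ``by construction'' with a figure reference) and then invoking Lemma~\ref{lem:ub}, while part (ii) is the same charging argument the paper uses, namely that each patched entry's weight $k-1$ is paid for by a distinct removed node of weight $k-1$, so the weight is bounded by that of the full $\lceil n/(2\opt+2\tk+3)\rceil\times\lceil m/(2\opt+2\tk+3)\rceil$ tiling of base blocks. Your write-up is in fact more explicit than the paper's on both the seam case analysis and the disjointness of the triggering removed nodes, but the route is identical.
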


\begin{proof}{}
(i) By construction $\C(\T{n}{m},k,t)$ is ($k,t$)-simple-monotone (cfr. Figure \ref{tori3}), hence by
 Lemma \ref{lem:ub}, $\C(\T{n}{m},k,t)$  is a ($k,t$)-dynamo.\\
(ii) The grid contains $\lceil\frac{n}{2\opt+2\tk+3}\rceil \times \lceil\frac{m}{2\opt+2\tk+3}\rceil$ cells.  If $t\geq k-1$,
each cell has weight\\ $w(\C(\T{2\opt+2\tk+3}{2\opt+2\tk+3},k,t))= (2\opt+2\tk+3) \times \left( k-1+\opt(\opt+1)\right).$
 \\
 Moreover, the nodes that change their weight take the weight of a removed element. Hence, the weight of $\C(\T{n}{m},k,t)$ is upper bounded by the weight of the full grid which is $\lceil\frac{n}{2\opt+2\tk+3}\rceil \times \lceil\frac{m}{2\opt+2\tk+3}\rceil \times w(\C(\T{2\opt+2\tk+3}{2\opt+2\tk+3},k,t))$.  Similarly for $t< k-1$. 
\end{proof}
By Corollary \ref{cor1} and Theorem \ref{th:tori_irr} we have the following Corollary.
\begin{corol}
If both $n$ and $m$ are  multiples of  $2\opt+2\tk+3$,  $\C(\T{n}{m},k,t)$ is an optimal ($k,t$)-dynamo.
\end{corol}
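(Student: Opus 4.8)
The plan is to obtain optimality by a direct sandwiching argument: show that, under the divisibility hypothesis, the upper bound on $w(\C(\T{n}{m},k,t))$ supplied by Theorem \ref{th:tori_irr}(ii) coincides exactly with the lower bound supplied by Corollary \ref{cor1}. The first observation I would record is that the torus $\T{n}{m}$ is a $4$-regular graph on $|V|=nm$ nodes, so Corollary \ref{cor1} applies to it verbatim (it is the $d$-regular, i.e.\ $\rho=1$, instance of Theorem \ref{th1}). A point worth stating explicitly is that the quantity $\opt$ used in the construction of $\C(\T{n}{m},k,t)$ is \emph{by definition} the same $\opt$ that appears in Corollary \ref{cor1}, in both the regime $t\geq k-1$ (where $\opt=\lfloor\sqrt{t+1+\tk^2+\tk}\rfloor-(\tk+1)$) and the regime $t<k-1$ (where $\opt=\lfloor\sqrt{t+1}\rfloor-(\tk+1)$); hence there is nothing to reconcile between the two bounds.

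Next I would invoke the divisibility assumption to eliminate the ceilings. Since $2\opt+2\tk+3$ divides both $n$ and $m$, we have $\lceil n/(2\opt+2\tk+3)\rceil = n/(2\opt+2\tk+3)$ and likewise for $m$. Substituting these exact values into the bound of Theorem \ref{th:tori_irr}(ii) collapses the factor $\lceil n/(2\opt+2\tk+3)\rceil\,\lceil m/(2\opt+2\tk+3)\rceil\,(2\opt+2\tk+3)$ into $nm/(2\opt+2\tk+3)$, yielding $w(\C(\T{n}{m},k,t))\leq \frac{nm}{2\opt+2\tk+3}\bigl(k-1+\opt(\opt+1)\bigr)$ when $t\geq k-1$, and the analogous expression carrying the extra term $-\tk(\tk+1)$ when $t<k-1$. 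Comparing with Corollary \ref{cor1} applied to $G=\T{n}{m}$, where $|V|=nm$, the lower bound in the case $t\geq k-1$ is precisely $\frac{nm}{2\opt+2\tk+3}\bigl(k-1+\opt(\opt+1)\bigr)$, and the case $t<k-1$ matches in exactly the same fashion. Since $w(\C(\T{n}{m},k,t))$ is squeezed between two equal quantities, it equals the minimum weight of any $(k,t)$-dynamo on $\T{n}{m}$, so $\C(\T{n}{m},k,t)$ is optimal, which is the claim.

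The argument is essentially bookkeeping, so I do not expect a genuine obstacle; the one thing that requires care is confirming that the two piecewise bounds line up termwise and that the same branch is selected on both sides. This is where any hidden mismatch could arise, but because Theorem \ref{th:tori_irr} and Corollary \ref{cor1} split on the identical condition $t\geq k-1$ versus $t<k-1$ and use identical closed forms for $\opt$, the match is immediate in each branch. Throughout I would take as given, from earlier in the paper, that $\C(\T{n}{m},k,t)$ is indeed a $(k,t)$-dynamo (Theorem \ref{th:tori_irr}(i)) and that Corollary \ref{cor1} is a valid lower bound, so that the sandwiching is justified.
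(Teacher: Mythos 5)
Your proposal is correct and follows essentially the same route as the paper, which obtains the corollary by combining the lower bound of Corollary \ref{cor1} (applied to the $4$-regular torus with $|V|=nm$) with the upper bound of Theorem \ref{th:tori_irr}(ii), the divisibility hypothesis removing the ceilings so that the two bounds coincide. The only difference is that you spell out the bookkeeping (branch matching and the collapse of $\lceil n/(2\opt+2\tk+3)\rceil\lceil m/(2\opt+2\tk+3)\rceil(2\opt+2\tk+3)$ to $nm/(2\opt+2\tk+3)$) that the paper leaves implicit.
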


\subsection{Cliques} \vspace*{-0.2truecm}
Let $K_n$ be the clique on $n$ nodes. A necessary condition for a $k$-dynamo $\C(K_n,k)$ is that $\lceil \lambda(n-1) \rceil$ nodes  are weighted by $k-1$. The condition is also sufficient and if the remaining $\lfloor \lambda(n-1) \rfloor$ nodes are weighted by $0$, the $k$-dynamo is optimal and reaches its final configuration within $t=k-1$ rounds. So,  when $t\geq k-1$ the optimal configuration is obtained by weighting $\lceil \lambda(n-1) \rceil$ nodes by $k-1$ and the remaining nodes by $0$. For $t<k-1$, an optimal ($k$,$t$)-dynamo is obtained by assigning weight $k-t-1$ to all the non-$k-1$ weighted nodes. Clearly this configuration is optimal, if we assign a weight smaller than $k-t-1$ to a node $v$, then $v$ can not reach the weight $k-1$ within $t$ rounds. Therefore:
\begin{theorem}
Let $K_n$ be the clique on  $n$ nodes. An optimal ($k$,$t$)-dynamo $\C(K_n,k,t)$ has weight \\$w(\C(K_n,k,t))=(k-1) \times \lceil \lambda(n-1) \rceil + \max(k-t-1,0) \times \lfloor  \lambda(n-1)  \rfloor$.
\end{theorem}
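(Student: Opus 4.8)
The plan is to prove the weight formula by combining a lower bound (necessary conditions that every $(k,t)$-dynamo must satisfy) with a matching explicit construction. The whole argument rests on one structural feature of $K_n$: every node is adjacent to all of the other $n-1$ nodes, so for any node $v$ the quantity $n^+(v)$ equals the total number of nodes in the \emph{whole} graph whose current weight strictly exceeds $c_v$, and the threshold is the constant $\lceil\lambda(n-1)\rceil$. In particular all nodes sharing a common weight behave identically at each round, which reduces the dynamics to tracking how the weight classes evolve.

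First I would establish the lower bound in two independent steps. \emph{Seeds:} any $(k,t)$-dynamo must start with at least $\lceil\lambda(n-1)\rceil$ nodes of weight $k-1$. The argument is a bootstrap: a node reaches $k-1$ only by being at $k-2$ and then seeing at least $\lceil\lambda(n-1)\rceil$ neighbours strictly above $k-2$, i.e. at least $\lceil\lambda(n-1)\rceil$ nodes already at $k-1$; but the count of weight-$(k-1)$ nodes is non-decreasing and can grow \emph{only} through such promotions, so if it starts below $\lceil\lambda(n-1)\rceil$ it remains below forever and unanimity is never reached. \emph{Budget per node:} since a weight rises by at most one per round, a node with initial weight $w$ needs at least $k-1-w$ rounds to reach $k-1$, so every node must start with weight at least $\max(k-1-t,0)=\max(k-t-1,0)$. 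Combining the two, the total weight is at least $(k-1)$ times the number of seeds plus $\max(k-t-1,0)$ times the number of non-seeds; here I would also check that converting an extra node into a seed changes the total by $(k-1)-\max(k-t-1,0)\ge 0$, so using exactly $\lceil\lambda(n-1)\rceil$ seeds is optimal and the remaining $n-\lceil\lambda(n-1)\rceil$ nodes should each carry the minimum weight $\max(k-t-1,0)$.

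Next I would exhibit the matching configuration and verify it is a genuine $(k,t)$-dynamo: assign weight $k-1$ to exactly $\lceil\lambda(n-1)\rceil$ nodes and weight $\max(k-t-1,0)$ to all others (weight $0$ when $t\ge k-1$, weight $k-t-1$ when $t<k-1$). Because the $\lceil\lambda(n-1)\rceil$ seeds alone already meet the threshold for \emph{every} node whose weight lies below $k-1$, each non-seed increases by exactly one each round until it reaches $k-1$; this is a short induction on the round number in the spirit of Lemma~\ref{lem:ub}, using that any intermediate weights only add further nodes above a given node and hence can never block a promotion. A node starting at $\max(k-t-1,0)$ thus reaches $k-1$ in $k-1-\max(k-t-1,0)\le t$ rounds, so the configuration converges in time, and its weight equals the lower bound.

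The main obstacle is the optimality half rather than the construction: one must rule out that a cascade through several intermediate weight levels could lower the seed count below $\lceil\lambda(n-1)\rceil$ or let some node start below $\max(k-t-1,0)$. Both escapes are closed by the two monotonicity facts above—the weight-$(k-1)$ count never bootstraps from below threshold, and no node gains more than one unit per round—so no intermediate-weight gradient can ever beat the flat construction. The only remaining bookkeeping is to match the count of non-seed nodes against the coefficient appearing in the statement and to carry the trivial split between the regimes $t\ge k-1$ and $t<k-1$, both of which are subsumed by the $\max(k-t-1,0)$ term.
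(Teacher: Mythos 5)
Your proposal is correct and follows essentially the same route as the paper: necessity of $\lceil\lambda(n-1)\rceil$ seeds at weight $k-1$ (via the monotone bootstrap on the count of weight-$(k-1)$ nodes), a per-node budget of $\max(k-t-1,0)$ forced by the one-unit-per-round rule, and a matching flat construction — you merely spell out steps the paper leaves implicit. The one point you flag at the end, that the number of non-seed nodes is $n-\lceil\lambda(n-1)\rceil$ rather than the coefficient $\lfloor\lambda(n-1)\rfloor$ appearing in the stated formula, is a genuine discrepancy in the paper's own statement (e.g.\ $\lambda=1/2$, $n=4$ gives two non-seeds but a coefficient of $1$), not a gap in your argument.
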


\section{Conclusion and Open Problems} \vspace*{-0.2truecm}
In this work we studied multivalued dynamos with respect to both weight and  time. 
We derived lower bounds on the weight of ($k,t$)-dynamo and provided constructive tight upper bounds for rings, tori and cliques.
Several dimensions of the problem remain unexplored, different updating rules with could
be addressed, as for instance the case of reversible rules.
Finally, the behavior of this protocol on different topologies such as small world \cite{WATTS99}, scale-free \cite{Barabasi99}, and time-varying networks \cite{CFQS2012J}.

\paragraph{\textbf{Acknowledgments.}}
We would like to thank Ugo Vaccaro for many stimulating discussions and the anonymous referees whose helpful comments allowed to significantly improve the presentation of their work.
\bibliographystyle{plain}
\baselineskip=0.44truecm
{\footnotesize
\bibliography{wg2012}
}

\end{document}